
\documentclass[12pt]{article}%
\usepackage{amsfonts}
\usepackage{amsmath,amssymb,amsthm,enumerate,epsfig,graphicx}
\usepackage{ifthen,latexsym,syntonly,natbib}
\usepackage{amsmath}
\usepackage{amssymb}
\usepackage{graphicx}%
\setcounter{MaxMatrixCols}{30}
\nonstopmode
\textwidth=6.7in
\textheight=8.9in
\headheight=0.0in
\oddsidemargin=0.0in
\headsep=0.0in
\topmargin=0.0in
\newtheorem{theorem}{Theorem}

\newtheorem{lemma}{Lemma}

\newtheorem{remark}{Remark}

\makeatletter
\@addtoreset{equation}{section}
\renewcommand\@biblabel[1]{}
\def\baselinestretch{1.0}
\begin{document}

\title{Portfolio Choice in Markets with Contagion}
\date{
\today
}
\author{Yacine A\"{\i}t-Sahalia\thanks{Research supported by the NSF\ under grant
NES-0850533.}\\{\small Department of Economics}\\{\small Princeton University }\\{\small and NBER}\bigskip
\and T. R. Hurd\thanks{Research supported by NSERC and MITACS Canada.}\\{\small Dept. of Mathematics }\\{\small and Statistics}\\{\small McMaster University}}
\maketitle

\begin{abstract}
\noindent{ We consider the problem of optimal investment and consumption in a
class of multidimensional jump-diffusion models in which asset prices are
subject to mutually exciting jump processes. This captures a type of contagion
where each downward jump in an asset's price results in increased likelihood
of further jumps, both in that asset and in the other assets. We solve in
closed-form the dynamic consumption-investment problem of a log-utility
investor in such a contagion model, prove a theorem verifying its optimality
and discuss features of the solution, including flight-to-quality. The
exponential and power utility investors are also considered: in these cases,
the optimal strategy can be characterized as a distortion of the strategy of a
corresponding non-contagion investor. }

\bigskip\noindent{\textbf{Keywords:\ }Merton problem, jumps, Hawkes process,
mutual excitation, contagion, flight-to-quality.}

\noindent\textbf{JEL Classification}: G11, G01.

\end{abstract}

\thispagestyle{empty} \ \newpage\setcounter{page}{1}
\renewcommand{\baselinestretch}{1.5}
\normalsize
\setcounter{equation}{0}

\section{Introduction}

The recent financial crisis has emphasized the relevance of jumps for
understanding the various forms of risk inherent in asset returns and their
implications for asset allocation and diversification. Most portfolios, from
those of individual investors to those of more sophisticated institutional
investors including University endowments, suffered badly during the latest
crisis episode, with many commonly employed asset allocation strategies
resulting in large losses in 2008-09. Reasonably diversified portfolios can
survive a single isolated negative jump in asset returns. However, jumps that
tend to affect most or all asset classes together are difficult to hedge by
diversification alone. Moreover, additional jumps of this nature seemed to
happen in close succession, as if the very occurrence of a jump substantially
increased the likelihood of future jumps.

Motivated by these events, we consider in this paper the issue of optimal
portfolio construction when assets are subject to jumps that share the
qualitative features experienced most vividly during the recent financial
crisis. These salient features include the fact that multiple jumps were
observed, at a rate that was markedly higher than the long term unconditional
arrival rate; these jumps affected multiple asset classes and markets; and
they affected them not necessarily at the same time, but typically in close
succession over days or weeks.

To capture these key elements, we consider a model for asset returns where a
jump in one asset class or region raises the probability of future jumps in
both the same asset class or region, and the other classes or regions. Jump
processes of this type were first introduced by \cite{hawkes71a} with further
developments due to \cite{HawkesOakes74} and \cite{Oakes75}. Models of this
type have been used in epidemiology, neurophysiology and seismology (see,
e.g., \cite{brillinger88} and \cite{OgataAkaike82}), genome analysis (see
\cite{reynaudbouretschbath10}), credit derivatives
(\cite{erraisgieseckegoldberg10}), to model transaction times and price
changes at high frequency (\cite{bowsher07}), trading activity at different
maturities of the yield curve (\cite{salmontham}) and propagation phenomena in
social interactions (\cite{cranesornette08}.)

We extend the pure jump Hawkes model employed in the above applications, in
order to better represent financial asset returns. We add a drift to capture
the assets' expected returns and a standard, Brownian-driven, volatility
component to capture their day-to-day normal variations. We call this model a
\textquotedblleft Hawkes jump-diffusion\textquotedblright\ by analogy with the
Poisson jump-diffusion of \cite{merton76}. Unlike models typically employed in
finance, the jump part of this model is no longer a L\'{e}vy process since
excitation introduces a departure from independence of the increments of the jumps.

In the model, jump intensities are stochastic and react to recent jumps: a
jump increases the rate of incidence (or intensity) of future jumps; running
counter to this is mean reversion, which pulls the jump intensities back down
in the absence of further excitation. In the univariate case, only
\textquotedblleft self excitation\textquotedblright\ can take place, whereas
in the multivariate case \textquotedblleft mutual excitation\textquotedblright%
\ consisting of both self- and cross-excitation (from one asset to
another)\ can take place. We now illustrate the presence of the mutual
excitation phenomenon by filtering jump intensities for jumps in the US
financial sector stock index during the recent crisis. Figure
\ref{fig:ach_datafigure} plots the estimated intensity of US\ jumps over time,
filtered from the observed returns on indices of financial stocks in the US,
UK, Eurozone and Asia. The excitation mechanism is apparent in the Fall of
2008, when jump intensities increase rapidly in response to each jump, most of
them originating in the US, and to a lesser extent in the Winter of 2009,
which looks more like a slow train wreck.

The bottom panel of the plot shows that the filtered intensities contain
information that is different from other measures of market stress, VIX and
the CDS rate on financial stocks. In particular, VIX\ is a measure of total
quadratic variation and as a result captures the total risk of the assets
instead of just their jump risk. So the same jumps which cause the jump
intensity to increase in the middle panel also cause VIX\ to increase in the
bottom panel, but VIX includes Brownian volatility, making it\ a much noisier
measure of jump risk.%

\begin{figure}
[ptb]
\begin{center}
\includegraphics[
height=5.0963in,
width=7.0543in
]%
{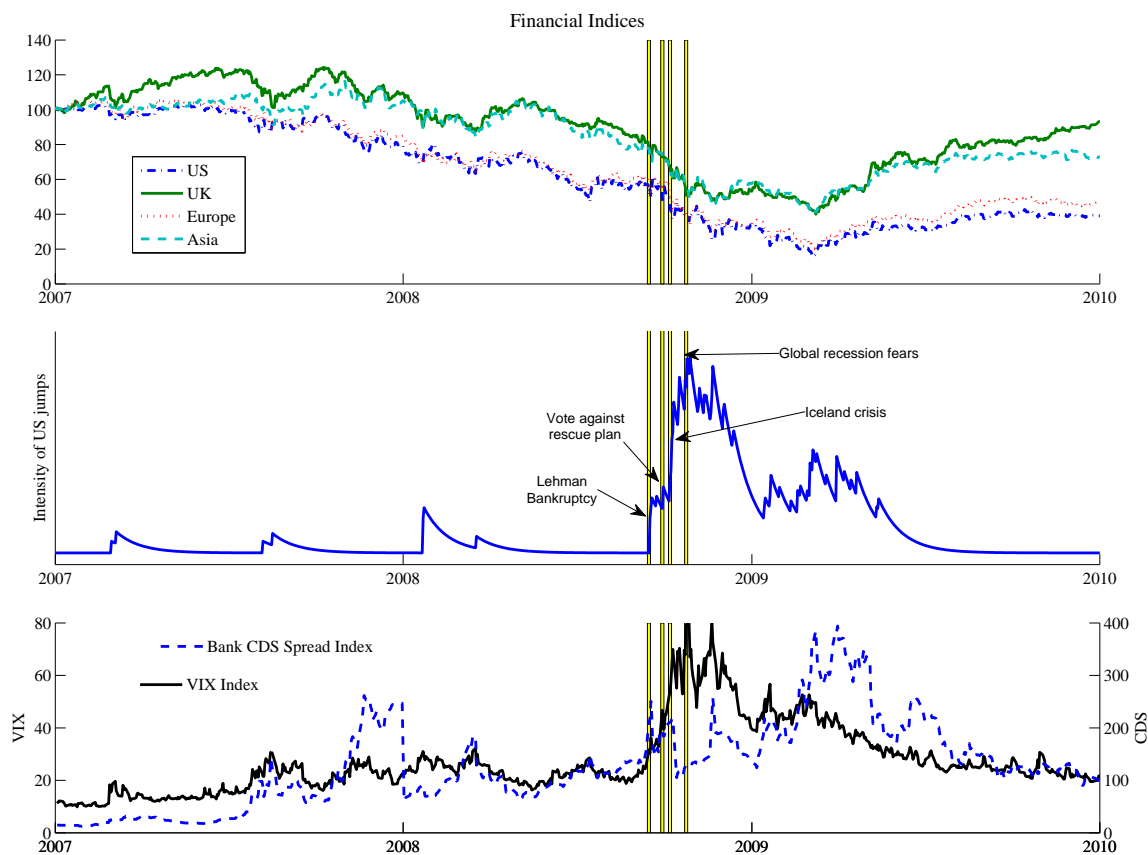}%
\caption{Time-varying Jump Intensities During the Financial Crisis. \ \ \ The
top panel shows time series of stock indices for the financial sector in the
four regions, 2007-2009. The middle panel shows the intensity of US\ jumps
filtered from the model, based on each 3$\sigma$ and above event identified as
a jump. Each jump leads to an increase in the jump intensity, followed by mean
reversion until the next jump. The bottom panel shows the time series of two
alternative measures of financial distress, the VIX\ index and Markit's
CDS\ index for the banking sector in the US.}%
\label{fig:ach_datafigure}%
\end{center}
\end{figure}

The purpose of the paper is to solve for the optimal portfolio of an investor
who faces this type of risk in his/her investment opportunity set. By
considering a more realistic model for jumps, incorporating mutual excitation,
we are able to study the optimal portfolio of an investor in a realistic
setting where a jump that occurs somewhere will increase the probability of
further jumps in other asset classes or markets. The model generates jumps
that will tend to be clustered (as a result of the time series
self-excitation), systematic (as a result of the cross-sectional
excitation),\ but neither exactly simultaneous nor certain, since the
excitation phenomenon merely raises the probability of future jump occurrence.
By analogy with epidemics, the probability of getting infected increases in a
pandemy but does not typically reach one, and there is an incubation period
which can range in the case of financial markets from hours to days, depending
upon subsequent news arrival, and once established the pandemy does not go
away immediately. Furthermore, the model is multivariate and the contagion can
be asymmetric, with jumps occurring in one asset class or market having a
greater excitation potential for the other sectors or regions than jumps that
originate elsewhere: for instance, most financial crises that originate in or
transit through the US tend to have greater ramifications in the rest of the
world than crises that originate outside the US. Poisson jumps, whose
intensities are constant, are not able to reproduce these empirical features,
and this motivates our inclusion of the more general class of Hawkes jumps in
the model.

This paper is part of a literature that has investigated the properties of
optimal portfolios when asset returns can jump (see, e.g., \cite{aase84},
\cite{jeanblancpontier90}, \cite{shirakawa90}, \cite{hanrachev00},
\cite{ortobellihuberrachevschwartz03}, \cite{kallsen00}, \cite{CarrJinMadan01}%
, \cite{liulongstaffpan03}, \cite{dasuppal04}, \cite{emmerkluppelberg04},
\cite{madan04}, \cite{CvitanicPolimenisZapatero08}, \cite{delongkluppelberg08}
and \cite{ach09}). The novel aspect in the present paper is the inclusion of
Hawkes jumps in asset returns: such jumps share the dual characteristics of
being systematic, meaning that they affect multiple assets or asset classes at
the same time, and mutually exciting, meaning that they affect the rate at
which future jumps occur in each asset class. By contrast, in the earlier
model of \cite{ach09}, assets were subject to random jumps which could affect
one or more asset or asset classes, but when they occurred, they were
simultaneous and every asset in that sector or region would jump. Such jumps
were also \textquotedblleft Poissonian\textquotedblright, in the sense that
the arrival of jumps today did not influence the future arrival of jumps.

We show that the optimal portfolio solution in the model can be obtained in
full closed-form in the log utility case, and in quasi-closed-form in other
cases. Importantly, we show that the optimal solution becomes time-varying,
with the investor reacting to changes in the intensity of the jumps. For a
log-utility investor, the solution remains myopic, as in the classical
\cite{merton71} problem with log-utility, in the sense that the investor does
not need to take into account the full dynamics of the state variables. The
log investor in our model holds at each point in time the same portfolio as a
log investor who believes that jump intensities are constant, but his/her
optimal portfolio weight is now constantly changing to reflect the
time-variation in jump intensities. One consequence of this result is that
each time a market shock occurs, the investor perceives an increase in jump
intensities, and sells some amount of \textit{each} risky asset, and invests
the proceeds in the riskless asset, a behavior we interpret as a
\textquotedblleft flight to quality.\textquotedblright

Formal computations also work for both power and exponential utility
investors, although we have not proved the appropriate verification theorem
for these utilities. Nevertheless, the resultant investment strategy under
contagion can be interpreted in terms of the equivalent strategy under the
\textquotedblleft non-contagion\textquotedblright\ assumption that jump
intensities are constant. We find that under the contagion conditions of the
model, the investor will choose a portfolio that is optimal for a
non-contagion investor who has a specific distorted value of the intensities,
which we characterize. This distortion of intensities has the effect of
magnifying the investment in the risky assets: the contagion investor will go
\textquotedblleft longer\textquotedblright\ when the non-contagion investor is
long in the risky asset, and will go \textquotedblleft
shorter\textquotedblright\ when the non-contagion investor is short in the
risky asset.

The paper is organized as follows. Section \ref{sec:model} presents the model
for asset returns. Section \ref{sec:choice} introduces the optimal portfolio
problem when jumps are mutually exciting in the general case. Section
\ref{sec:logU} specializes the solution to the case of an investor with
log-utility and derives the optimal portfolio and consumption policy in
closed-form, including a complete verification theorem that supports this
policy. Section \ref{sec:sector} develops some interesting market
specifications which exhibit an explicit closed-form log-optimal policy.
Section \ref{sec:conseq} explores some of the properties of such explicit
asset allocation policies. The exponential and power utility investment
problems are outlined in Section \ref{sec:powerandexp}. In these problems, the
solutions can be characterized as distorted versions of the non-contagion
solutions. Section \ref{sec:conclusions} concludes.

\section{Mutually Exciting Jumps\label{sec:model}}

In this paper, jumps in one asset class not only increase the probability of
future jumps in that asset class (self excitation)\ but also in other asset
classes (cross excitation). In a mutually exciting model, the intensity of a
jump counting process $N$ ramps up in response to past jumps. A mutually
exciting process is a special case of path-dependent point process, whose
intensity depends on the path of the underlying process. Mutually exciting
counting processes, $N_{l,t}$ ($l=1,...,m$), form an $m$-vector
$\boldsymbol{N}_{t}=\left[  N_{1,t},...,N_{m,t}\right]  ^{\prime}$ such
that\footnote{In this paper, we work in a filtered probability space
$(\Omega,\mathcal{F},(\mathcal{F}_{t})_{t\geq0},\mathbb{P})$ that satisfies
\textquotedblleft the usual conditions.\textquotedblright}%
\begin{align}
\mathbb{P}\left[  N_{l,t+\Delta t}-N_{l,t}=1|\mathcal{F}_{t}\right]   &
=\lambda_{l,t}\Delta t+o\left(  \Delta t\right)  ,\label{eq:Hawkes_pr}\\
\mathbb{P}\left[  N_{l,t+\Delta t}-N_{l,t}>0|\mathcal{F}_{t}\right]   &
=o\left(  \Delta t\right) \nonumber
\end{align}
independently for each $l$. In the standard model specification we adopt in
this paper, the Hawkes intensity processes $\boldsymbol{\lambda}_{t}=\left[
\lambda_{1,t},...,\lambda_{m,t}\right]  ^{\prime}$ have the integrated form%
\begin{equation}
\lambda_{l,t}=e^{-\alpha_{l}t}\lambda_{l,0}+(1-e^{-\alpha_{l}t})\lambda
_{l,\infty}+\sum\nolimits_{j=1}^{m}\int_{0}^{t}d_{lj}e^{-\alpha_{l}%
(t-s)}dN_{j,s},\text{ }l=1,...,m \label{eq:Hawkes_intensity}%
\end{equation}
For all $l,j=1,\ldots,m$ the parameters $\lambda_{l,\infty},d_{lj}\geq0$ and
$\lambda_{l,0},\alpha_{l}>0$ are constants. These parameter restrictions
ensure the positivity of the intensity processes with probability one.

Differentiation of equation (\ref{eq:Hawkes_intensity}) shows that the
intensity in asset class $l$ has dynamics given by%
\begin{equation}
d\lambda_{l,t}=\alpha_{l}\left(  \lambda_{l,\infty}-\lambda_{l,t}\right)
dt+\sum\nolimits_{j=1}^{m}d_{lj}dN_{j,t}. \label{lambdadynamics}%
\end{equation}
In other words, a jump $dN_{j,s},$ occurring at time $s\in\lbrack0,t)$ in
asset class $j=1,\ldots,m,$ raises each of the jump intensities $\lambda
_{l,t}$, $l=1,\ldots,m,$ by a constant amount $d_{lj}$. The $l$th jump
intensity then mean-reverts to level $\lambda_{l,\infty}$ at speed $\alpha
_{l}$ until the next jump occurs. Equation \ref{lambdadynamics} also reveals
that $(\boldsymbol{N},\boldsymbol{\lambda})$ is a $2m$-dimensional Markov
process, while $(\boldsymbol{\lambda})$ alone is an $m$-dimensional Markov process.

As a result, our model generates clusters of jumps over time, and jumps can
propagate at different speed and with different intensities in the different
asset classes depending on where they originate and which path they take to
reach a given asset class or market. Free parameters in the model control the
extent to which the two forms of excitation take place, the relative strength
of the contagion phenomenon in different directions and the speed with which
the excitation takes place and then relaxes.

The model produces both cross-asset class and time series excitation. In the
univariate self-exciting case, a typical sample path of one component of
$\boldsymbol{\lambda}$ is illustrated in Figure \ref{fig:curveint}. Each jump
increases the jump intensity, followed by mean-reversion.%

\begin{figure}
[tb]
\begin{center}
\includegraphics[
height=3.0237in,
width=5.6015in
]%
{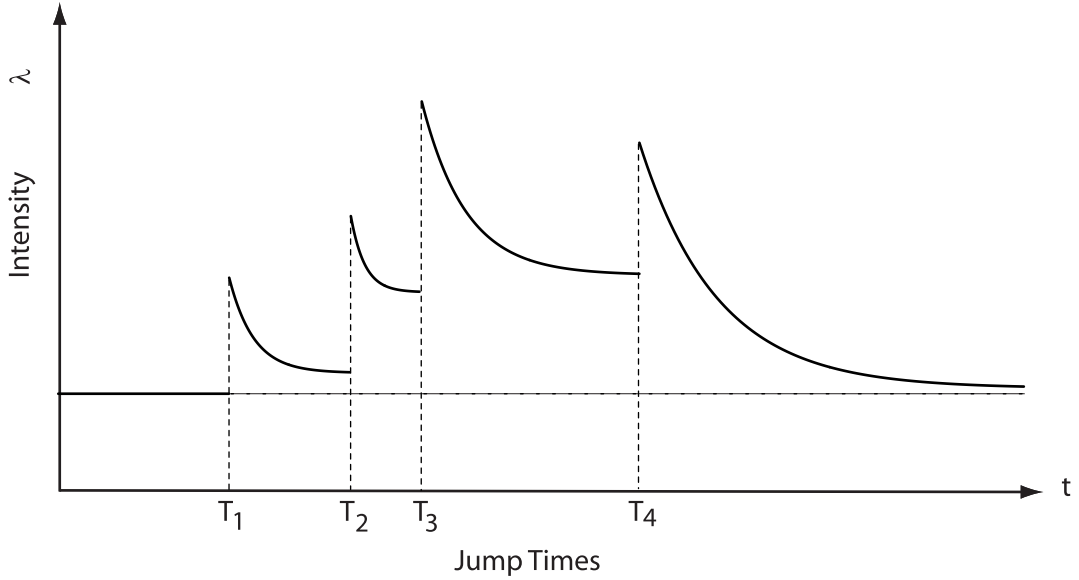}%
\caption{Sample path of a Hawkes intensity, $\lambda_{l,t}.$}%
\label{fig:curveint}%
\end{center}
\end{figure}
This model also produces cross-asset, or mutual, excitation. Jumps in asset
class $l$ that occurred $u$ units of time into the past raise the intensity of
jumps in asset class $j$ by $d_{jl}e^{-\alpha_{j}u},$ while conversely jumps
in asset class $j$ raise the intensity of jumps in asset class $l$ by
$d_{lj}e^{-\alpha_{l}u}.$ Jumps in a given asset class $i$ also raise the
intensity of future jumps in the same asset class. Figure
\ref{fig:2assetNSlambda} illustrates this with two assets. At time $T_{1}$
there is a jump in the first asset class value, $S_{1}$. This jump
self-excites the jump intensity $\lambda_{1}$. This increase in $\lambda_{1}$
raises the probability of observing another jump in $S_{1}$ at the future time
$T_{2}$. These jumps have a contagious effect on $S_{2}$ since a jump in
$S_{1}$ cross-excites the jump intensity of $S_{2}$. This, in turn, raises the
probability of seeing a jump in $S_{2}$ at time $T_{3}$. Latter on, at time
$T_{4}$, the jump in $S_{2}$ raises the probability of seeing a jump in
$S_{1}$ at some future time $T_{5}$. The degree to which self- and
cross-excitation matter in the model, and their relative strengths, is
controlled by the parameters in \textbf{$d$} and $\boldsymbol{\alpha}.$%

\begin{figure}
[pth]
\begin{center}
\includegraphics[
height=5.3366in,
width=5.1224in
]%
{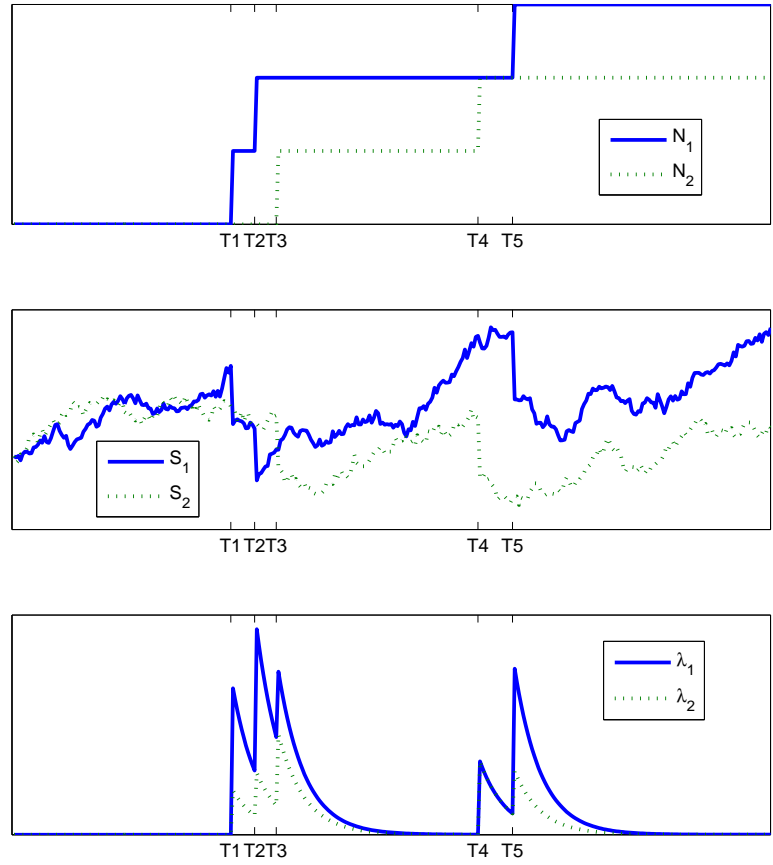}%
\caption{Cross- and self-excitation in a two asset-class world:\ Sample paths
of the jumps, asset prices and jump intensities.}%
\label{fig:2assetNSlambda}%
\end{center}
\end{figure}

We note that each compensated process $N_{l,t}-\int_{0}^{t}\lambda_{l,s}ds$ is
a local martingale. A number of important additional properties hold for this model:

\begin{enumerate}
\item \textbf{Markov Generator:\ } The Markov generator of this process acting
on differentiable functions $g:(\mathbb{Z}_{+})^{m}\times(\mathbb{R}_{+}%
)^{m}\to\mathbb{R}$ is given by
\[
[{\mathcal{A}} g](\boldsymbol{n},\boldsymbol{\lambda})=\sum_{l=1}^{m} \left[
\alpha_{l}\left(  \lambda_{l,\infty}-\lambda_{l}\right)  \frac{\partial
g}{\partial\lambda_{l}} +\lambda_{l}\left(  g(\boldsymbol{n}+\boldsymbol{e}%
_{l},\boldsymbol{\lambda}+\boldsymbol{d_{l}})-g(\boldsymbol{n},
\boldsymbol{\lambda})\right)  \right]
\]
where $\boldsymbol{e}_{l}=[\delta_{1l}, \dots, \delta_{ml}]^{\prime}$ and
$\boldsymbol{d}_{l}=[d_{1l}, \dots, d_{ml}]^{\prime}$. If
\[
\mathbb{E}\left[  \int^{t}_{0} \Bigl |\ [{\mathcal{A}}g](\boldsymbol{N}%
_{s},\boldsymbol{\lambda}_{s})-\sum_{\ell=1}^{m} \bigl[\alpha_{\ell}\left(
\lambda_{\ell,\infty}-\lambda_{\ell,s}\right)  \frac{\partial g(\boldsymbol{N}%
_{s},\boldsymbol{\lambda}_{s})}{\partial\lambda_{l}}\bigr]\ \Bigr| ds\right]
<\infty
\]
for all $t$ and $g$ is differentiable in $\lambda$, the Dynkin formula says
that for each $t\le T$ :
\begin{equation}
\label{Dynkin}\mathbb{E}[g(\boldsymbol{N}_{T},\boldsymbol{\lambda}%
_{T})|{\mathcal{F}}_{t}]=g(\boldsymbol{N}_{t},\boldsymbol{\lambda}_{t}) +
\mathbb{E}\left[  \int^{T}_{t} [{\mathcal{A}}g](\boldsymbol{N}_{s}%
,\boldsymbol{\lambda}_{s})\ ds|{\mathcal{F}}_{t}\right]  \ .
\end{equation}

\item \textbf{Stationarity Assumption:\ } Let the process $\boldsymbol{\lambda
}_{t}$ satisfy (\ref{lambdadynamics}) with $\boldsymbol{\lambda}_{0}%
\in\mathbb{R}_{+}^{m}$; $\alpha_{i}>0$; $d_{lj}\geq0$ and with
$\boldsymbol{\Gamma}=(\Gamma_{lj})_{l,j=1,\dots,m},$ where $\Gamma_{lj}%
=\alpha_{j}\delta_{lj}-d_{lj},$ a positive (hence invertible) matrix where
$\delta_{lj}$ is the Kronecker symbol. Then the intensities
$\boldsymbol{\lambda}$ are stationary processes with bounded moments. Using
the Dynkin formula, one can show that under this assumption the first moments
$f_{l}(t)=\mathbb{E}\left[  \lambda_{l}(t)\right]  $ converge as
$t\rightarrow\infty$ to the non-negative values $f_{l}(\infty)=\sum_{j=1}%
^{m}(\delta_{lj}-\alpha_{l}^{-1}d_{lj})\lambda_{j,\infty}$. Similar formulas
can be derived for the large time limits of the higher moment functions. Using
these bounds, the ergodic theorem for semimartingales (see e.g.
\cite{khasminskii60}) then implies that for any measurable function
$K(\lambda)$ that satisfies a bound $|K(\lambda)|\leq M(1+\Vert\lambda
\Vert^{2})$
\begin{equation}
\lim_{T\rightarrow\infty}\frac{1}{T}\int_{0}^{T}K(\lambda_{t})dt=\lim
_{t\rightarrow\infty}\mathbb{E}[K(\lambda_{t})]=\int_{\mathbb{R}_{+}^{n}%
}K(\lambda)\mu(d\lambda) \label{ergodic}%
\end{equation}
a.s. where $\mu(d\lambda)$ is the invariant (infinite time) measure of
$\lambda_{t}$.

\item \textbf{Affine Structure:\ } The joint characteristic function has the
affine form
\[
\mathbb{E}_{0,\lambda_{0}}[e^{i\boldsymbol{uN_{T}}+i\boldsymbol{v\lambda_{T}}%
}]=\exp\left[  iA(T;\boldsymbol{u},\boldsymbol{v})+i\sum\nolimits_{l=1}%
^{m}B_{l}(T;\boldsymbol{u},\boldsymbol{v})\lambda_{l,0}\right]
\]
where the deterministic functions $A,B_{l}$ satisfy the Riccati equations
\begin{align}
\frac{\partial A}{\partial T}  &  =\sum\nolimits_{l=1}^{m}\alpha_{l}%
\lambda_{l,\infty}B_{\ell};\quad A(0)=0;\\
\frac{\partial B_{l}}{\partial T}  &  =-\alpha_{l}B_{l}-i\left(
e^{iu_{l}+i\sum_{j=1}^{m}d_{jl}B_{j}}-1\right)  ;\quad B_{l}(0)=v_{l};\quad
l=1,\dots,m\text{ }.
\end{align}

\end{enumerate}

\section{Optimal Portfolio Selection When Jumps Are Mutually
Exciting\label{sec:choice}}

We now solve Merton's problem:\ the investor maximizes the expected utility of
consumption by investing in a set of $n$ risky assets and a riskless asset
over the infinite time horizon $t\in\lbrack0,\infty)$. The innovation is that
the risky assets are now subject to shocks generated by an $m$-dimensional
Hawkes jump-diffusion process.

\subsection{Asset Return Dynamics}

The riskless asset with price $S_{0,t}$ is assumed to earn a constant rate of
interest $r\geq0$. The $n$ risky assets with prices $\boldsymbol{S}%
_{t}=\left[  S_{1,t},\ldots,S_{n,t}\right]  ^{\prime}$ follow a semimartingale
dynamics with asset shocks generated by an $m$-dimensional Hawkes process.
Specifically, we assume
\begin{align}
\frac{dS_{0,t}}{S_{0,t}}  &  =rdt,\label{eq:S0}\\
\frac{dS_{i,t}}{S_{i,t-}}  &  =\left(  r+R_{i}\right)  dt+\sum_{j=1}^{n}%
\sigma_{i,j}dW_{j,t}+\sum_{l=1}^{m}J_{i,l}Z_{l,t}dN_{l,t},\text{ }i=1,...,n
\label{eq:dSi}%
\end{align}
Here $\boldsymbol{N}_{t}=$ $\left[  N_{1,t},\ldots,N_{m,t}\right]  ^{\prime}$
is an $m-$dimensional, $m\leq n,$ vector of mutually exciting Hawkes processes
with intensities $\boldsymbol{\lambda}_{t}=\left[  \lambda_{1,t},\dots
,\lambda_{m,t}\right]  ^{\prime}$ that follow the Markovian dynamics%
\begin{equation}
d\lambda_{l,t}=\alpha_{l}\left(  \lambda_{l,\infty}-\lambda_{l,t}\right)
dt+\sum\nolimits_{j=1}^{m}d_{lj}dN_{j,t},\quad l=1,\dots,m, \label{lambdaDE}%
\end{equation}
with constant parameters $\alpha_{l}>0,$ $\lambda_{l,\infty}\geq0,$ and
$d_{lj}\geq0$. Under the condition that $\boldsymbol{\Gamma}$ is a positive
matrix, the $\boldsymbol{\lambda}$ process is stationary.

The vector $\boldsymbol{W}_{t}=\left[  W_{1,t},\ldots,W_{n,t}\right]
^{\prime}$ is an $n-$dimensional standard Brownian motion. $J_{i,l}Z_{l,t}$ is
the response of asset $i$ to the $l$th shock where $Z_{l,t}$, a scalar random
variable with probability measure $\nu_{l}(dz)$ on $[0,1]$, is scaled on an
asset-by-asset basis by the deterministic scaling factor $J_{i,l}\in
\lbrack-1,0].$ For clarity, we chose to include only negative asset jumps in
the asset price dynamics, since those are the more relevant ones from both a
portfolio risk management perspective and their contribution to mutual excitation.

We assume that the individual Brownian motions, the Hawkes process and the
random variables $Z_{l}$ are mutually independent. The quantities $R_{i},$
$\sigma_{ij}$ and jump scaling factors $J_{i,l}$ are constant parameters. We
write $\boldsymbol{R}$ $\boldsymbol{=}$ $\left[  R_{1},\ldots,R_{n}\right]
^{\prime}$, $\boldsymbol{J}=(J_{i,l})_{i=1,...,n;l=1,\dots,m}$, and
$\boldsymbol{\sigma}=(\sigma_{i,j})_{i,j=1,\dots,n}$ and we assume that the
matrix $\boldsymbol{\Sigma}=\boldsymbol{\sigma\sigma}^{\prime}$ is nonsingular.

In Section \ref{sec:sector}, we will make further assumptions on the structure
of the matrix $\boldsymbol{\Sigma}$ to facilitate the derivation of an
explicit solution, assuming in particular that it possesses a factor
structure. But the existence and structure of the optimal portfolio solution
can be determined without further specialization, and we now turn to this problem.

\subsection{Wealth Dynamics and Expected Utility}

Let $\omega_{0,t}$ denote the percentage of wealth (or portfolio
weight)\ invested at time $t$ in the riskless asset and $\boldsymbol{\omega
}_{t}=\left[  \omega_{1,t},\ldots,\omega_{n,t}\right]  ^{\prime}$ denote the
vector of portfolio weights in each of the $n$ risky assets, assumed to be
adapted c\'{a}gl\'{a}d processes since the portfolio weights cannot anticipate
the jumps. The portfolio weights satisfy
\begin{equation}
\omega_{0,t}+%
{\displaystyle\sum\nolimits_{i=1}^{n}}
\omega_{i,t}=1\boldsymbol{.} \label{eq:omega}%
\end{equation}

The investor consumes $C_{t}$ at time $t$. In the absence of any income
derived outside his investments in these assets, the investor's wealth,
starting with the initial endowment $X_{0},$ follows the dynamics
\begin{align}
dX_{t}  &  =-C_{t}dt+\omega_{0,t}X_{t}\frac{dS_{0,t}}{S_{0,t-}}+%
{\displaystyle\sum\nolimits_{i=1}^{n}}
\omega_{i,t}X_{t}\frac{dS_{i,t}}{S_{i,t-}}\nonumber\\
&  =\left(  rX_{t}+\boldsymbol{\omega}_{t}^{\prime}\boldsymbol{R}X_{t}%
-C_{t}\right)  dt\text{ }\boldsymbol{+}\text{ }X_{t}\boldsymbol{\omega}%
_{t}^{\prime}\boldsymbol{\sigma}d\boldsymbol{W}_{t}+X_{t}\sum\nolimits_{l=1}%
^{m}\left(  \boldsymbol{\omega}_{t}^{\prime}\boldsymbol{J}\right)  _{l}%
Z_{l,t}dN_{l,t}. \label{eq:dX}%
\end{align}

We consider an investor with time-separable utility of consumption $U(\cdot)$
and subjective discount rate or \textquotedblleft impatience\textquotedblright%
\ parameter $\beta>0$. The investor's problem at any time $t\geq0$ is then to
pick the consumption and portfolio weight processes $\{C_{s}%
,\boldsymbol{\omega}_{s}\}_{t\leq s\leq\infty}$ which maximize the
infinite--horizon discounted expected utility of consumption. The optimal
policies $\{C_{s},\boldsymbol{\omega}_{s}\}_{t\leq s\leq\infty}$ are subject
to the \textit{admissibility condition} that the discounted wealth process
remains positive almost surely.

Stochastic dynamic programming (see, e.g., \cite{flemingsoner}) leads at time
$t$ to the discounted expected utility of consumption in the form
$V(X_{t},\boldsymbol{\lambda}_{t},t)$ where the value function is defined by%
\begin{equation}
V\left(  x,\boldsymbol{\lambda},t\right)  =\max_{\left\{  C_{s}%
,\boldsymbol{\omega}_{s};\text{ }t\leq s\leq\infty\right\}  }\mathbb{E}%
_{x,\boldsymbol{\lambda},t}\left[  \int_{t}^{\infty}e^{-\beta s}%
U(C_{s})ds\right]  \label{eq:V}%
\end{equation}
Here, the discounted wealth and intensities satisfy (\ref{eq:dX}) and
(\ref{lambdaDE}) over $[t,\infty)$ with initial conditions $X_{t}=x,$
$\boldsymbol{\lambda}_{t}=\boldsymbol{\lambda}$. Under the assumption that $V$
is sufficiently differentiable, the appropriate form of It\^{o}'s lemma (see,
e.g., \cite{protter2004}) for semi-martingale processes leads to the
Hamilton-Jacobi-Bellman equation that characterizes the optimal solution to
the investor's problem:%
\begin{align}
0  &  =\max_{\{C,\boldsymbol{\omega}\}}\left\{  \frac{\partial V\left(
x,\boldsymbol{\lambda},t\right)  }{\partial t}+\sum_{l=1}^{m}\alpha_{l}\left(
\lambda_{l,\infty}-\lambda_{l}\right)  \frac{\partial V\left(
x,\boldsymbol{\lambda},t\right)  }{\partial\lambda_{i}}+e^{-\beta
t}U(C)\right. \nonumber\\
&  +\frac{\partial V\left(  x,\boldsymbol{\lambda},t\right)  }{\partial
x}\left(  rx+\boldsymbol{\omega}^{\prime}\boldsymbol{R}x-C\right)  +\frac
{1}{2}\frac{\partial^{2}V\left(  x,\boldsymbol{\lambda},t\right)  }{\partial
x^{2}}\boldsymbol{\omega}^{\prime}\boldsymbol{\Sigma\omega}x^{2}%
+\label{eq:HJBforV}\\
&  \left.  \sum_{l=1}^{m}\lambda_{l}\int\left[  V\left(  x+\left(
\boldsymbol{\omega}^{\prime}\boldsymbol{J}\right)  _{l}zx,\boldsymbol{\lambda
}+\boldsymbol{d}_{l},t\right)  -V\left(  x,\boldsymbol{\lambda},t\right)
\right]  \nu_{l}\left(  dz\right)  \right\} \nonumber
\end{align}
with the transversality condition $\lim_{t\rightarrow\infty}\mathbb{E}\left[
V\left(  X_{t},\boldsymbol{\lambda}_{t},t\right)  \right]  =0.$

Using the standard time-homogeneity argument for infinite horizon problems, we
have that
\begin{align*}
e^{\beta t}V\left(  x,\boldsymbol{\lambda},t\right)   &  =\max_{\left\{
C_{s},\boldsymbol{\omega}_{s};\text{ }t\leq s\leq\infty\right\}  }%
\mathbb{E}_{x,\boldsymbol{\lambda},t}\left[  \int_{t}^{\infty}e^{-\beta
(s-t)}U(C_{s})ds\right] \\
&  =\max_{\left\{  C_{s},\boldsymbol{\omega}_{s};\text{ }t\leq s\leq
\infty\right\}  }\mathbb{E}_{x,\boldsymbol{\lambda},t}\left[  \int_{0}%
^{\infty}e^{-\beta u}U(C_{t+u})du\right] \\
&  =\max_{\left\{  C_{s},\boldsymbol{\omega}_{s};\text{ }0\leq s\leq
\infty\right\}  }\mathbb{E}_{x,\boldsymbol{\lambda},0}\left[  \int_{0}%
^{\infty}e^{-\beta u}U(C_{u})du\right] \\
&  =V(x,\boldsymbol{\lambda},0)\equiv L(x,\boldsymbol{\lambda})
\end{align*}
is independent of time. Thus $V\left(  x,\boldsymbol{\lambda},t\right)
=e^{-\beta t}L(x,\boldsymbol{\lambda})$ and (\ref{eq:HJBforV}) reduces to the
following time-independent equation for the value function $L:$%
\begin{align}
0=\max_{\left\{  C,\boldsymbol{\omega}\right\}  }  &  \left\{  U(C)-\beta
L(x,\boldsymbol{\lambda})+\sum_{l=1}^{m}\alpha_{l}\left(  \lambda_{l,\infty
}-\lambda_{l}\right)  \frac{\partial L\left(  x,\boldsymbol{\lambda}\right)
}{\partial\lambda_{l}}\right. \nonumber\\
&  +\frac{\partial L\left(  x,\boldsymbol{\lambda}\right)  }{\partial
x}\left(  rx+\boldsymbol{\omega}^{\prime}\boldsymbol{R}x-C\right)  +\frac
{1}{2}\frac{\partial^{2}L\left(  x,\boldsymbol{\lambda}\right)  }{\partial
x^{2}}\boldsymbol{\omega}^{\prime}\boldsymbol{\Sigma\omega}x^{2}%
\label{eq:HJBgeneral}\\
&  \left.  +\sum_{l=1}^{m}\lambda_{l}\int\left[  L\left(  x+\left(
\boldsymbol{\omega}^{\prime}\boldsymbol{J}\right)  _{l}zx,\boldsymbol{\lambda
}+\boldsymbol{d}_{l}\right)  -L\left(  x,\boldsymbol{\lambda}\right)  \right]
\nu_{l}\left(  dz\right)  \right\} \nonumber
\end{align}
with the transversality condition
\begin{equation}
\lim_{t\rightarrow\infty}\mathbb{E}\left[  e^{-\beta t}L\left(  X_{t}%
,\boldsymbol{\lambda}_{t}\right)  \right]  =0. \label{eq:Transversality}%
\end{equation}

The maximization problem in (\ref{eq:HJBgeneral}) separates into one for $C$,
with first order condition
\[
U^{\prime}\left(  C\right)  =\frac{\partial L\left(  x,\boldsymbol{\lambda
}\right)  }{\partial x}%
\]
and one for $\boldsymbol{\omega}:$%
\begin{align}
\boldsymbol{\omega}^{\ast}=\boldsymbol{\omega}^{\ast}(x,\boldsymbol{\lambda
}):=\operatorname{argmax}_{\left\{  \boldsymbol{\omega}\right\}  }  &
\left\{  \frac{\partial L\left(  x,\boldsymbol{\lambda}\right)  }{\partial
x}\boldsymbol{\omega}^{\prime}\boldsymbol{R}x+\frac{1}{2}\frac{\partial
^{2}L\left(  x,\boldsymbol{\lambda}\right)  }{\partial x^{2}}%
\boldsymbol{\omega}^{\prime}\boldsymbol{\Sigma\omega}x^{2}\right. \nonumber\\
&  \left.  +\sum_{l=1}^{m}\lambda_{l}\int\left[  L\left(  x+\left(
\boldsymbol{\omega}^{\prime}\boldsymbol{J}\right)  _{l}zx,\boldsymbol{\lambda
}+\boldsymbol{d}_{l}\right)  -L\left(  x,\boldsymbol{\lambda}\right)  \right]
\nu_{l}\left(  dz\right)  \right\}  \label{eq:HJM_omegapart}%
\end{align}
At time $t\geq0$, given wealth $X_{t}$ and intensity vector
$\boldsymbol{\lambda}_{t}$, the optimal consumption choice is therefore
$C_{t}^{\ast}=C^{\ast}(X_{t},\boldsymbol{\lambda}_{t})$ where
\begin{equation}
C^{\ast}(x,\boldsymbol{\lambda})\equiv\left[  U^{\prime}\right]  ^{-1}\left(
\partial L\left(  x,\boldsymbol{\lambda}\right)  /{\partial x}\right)  .
\label{eq:CstarGeneral}%
\end{equation}
In order to determine the optimal portfolio weights, wealth and value
function, we need to be more specific about the utility function $U.$

\section{Log-Utility Investors\label{sec:logU}}

There are three classic utility functions for which one may hope to make
further analytical progress, namely the log investor whose utility of
consumption is the logarithm function, the power investor and the exponential
investor. Collectively, these examples are known as HARA utilities. The
optimal investment problem for various simpler types of market dynamics with
these utilities lead to separable forms for the value functions. As we shall
now show in this and Section \ref{sec:powerandexp}, this separation property
extends to the Hawkes-diffusion model, albeit with some extra twists. In this
section, we concentrate on the log-investor, for whom we are able to prove
strong results on the existence and uniqueness of the optimal strategy.

\subsection{Optimal Investment with Log-Utility\label{ssec:log}}

We now specialize the problem to that faced by an investor with logarithmic
utility, $U\left(  x\right)  =\log\left(  x\right)  $. To start, we look for a
candidate solution to (\ref{eq:HJBgeneral}) in the form
\begin{equation}
L(x,\boldsymbol{\lambda})=f\left(  \boldsymbol{\lambda}\right)  +M^{-1}%
\log\left(  x\right)  \label{eq:Vform_log}%
\end{equation}
for some positive function $f$ and constant $M$. Then%
\begin{align}
\frac{\partial L\left(  x,\boldsymbol{\lambda}\right)  }{\partial\lambda_{l}}
&  =f_{\lambda_{l}}\left(  \boldsymbol{\lambda}\right)  ,\text{ }%
\frac{\partial L\left(  x,\boldsymbol{\lambda}\right)  }{\partial x}%
=M^{-1}x^{-1},\nonumber\\
\frac{\partial^{2}L\left(  x,\boldsymbol{\lambda}\right)  }{\partial x^{2}}
&  =-M^{-1}x^{-2}. \label{eq:deriveV_log}%
\end{align}
and the optimal policy for the portfolio weights at time $t\geq0$ is
$\boldsymbol{\omega}_{t}^{\ast}=\boldsymbol{\omega}^{\ast}\left(
\boldsymbol{\lambda}_{t}\right)  $ where
\begin{align}
\boldsymbol{\omega}^{\ast}\left(  \boldsymbol{\lambda}\right)   &
=\operatorname{argmin}_{\boldsymbol{\omega}}K_{l}(\boldsymbol{\omega
},\boldsymbol{\lambda})\nonumber\\
K_{l}(\boldsymbol{\omega},\boldsymbol{\lambda})  &  \equiv\left\{
-\boldsymbol{\omega}^{\prime}\boldsymbol{R}+\frac{1}{2}\boldsymbol{\omega
}^{\prime}\boldsymbol{\Sigma\omega}-\sum_{l=1}^{m}\lambda_{l}\int\log\left(
1+\left(  \boldsymbol{\omega}^{\prime}\boldsymbol{J}\right)  _{l}z\right)
\nu_{l}\left(  dz\right)  \right\}  . \label{eq:omegastar}%
\end{align}
We note that the convexity of $K_{l}$ implies this minimization has a unique
solution $\boldsymbol{\omega}^{\ast}\left(  \boldsymbol{\lambda}_{t}\right)  $
for any $\boldsymbol{\lambda}\in\mathbb{R}_{+}^{n}$. As for the optimal
consumption policy, from equation (\ref{eq:CstarGeneral}), and the facts that
$[U^{\prime}]^{-1}(y)=y^{-1}$ and $\partial L\left(  x,\boldsymbol{\lambda
}\right)  /\partial x=M^{-1}x^{-1},$ we obtain $M=\beta$ and
\begin{equation}
C_{t}^{\ast}=\beta X_{t} \label{eq:Cstar_withK}%
\end{equation}

Next, we substitute the optimal $\left(  C^{\ast},\boldsymbol{\omega}^{\ast
}\right)  $ into (\ref{eq:HJBgeneral}) and determine that the function $f$
must solve
\begin{equation}
\lbrack\mathcal{A}f](\boldsymbol{\lambda})-\beta f\left(  \boldsymbol{\lambda
}\right)  =F(\boldsymbol{\lambda}),\quad\boldsymbol{\lambda}\in\mathbb{R}%
_{+}^{n} \label{HJBf}%
\end{equation}
where the Markov generator $\mathcal{A}$ for the process $\boldsymbol{\lambda
}_{t}$ is given by
\begin{equation}
\lbrack\mathcal{A}f](\boldsymbol{\lambda})=\sum_{l=1}^{m}\left(  \alpha
_{l}\left(  \lambda_{l,\infty}-\lambda_{l}\right)  f_{\lambda_{l}}\left(
\boldsymbol{\lambda}\right)  +\lambda_{l}\left[  f\left(  \boldsymbol{\lambda
}+\boldsymbol{d}_{l}\right)  -f\left(  \boldsymbol{\lambda}\right)  \right]
\right)  \label{Generator}%
\end{equation}
and the nonhomogeneous term is
\[
F(\boldsymbol{\lambda})=1-\frac{r}{\beta}-\log\beta+\beta^{-1}K_{l}%
(\boldsymbol{\omega}^{\ast}(\boldsymbol{\lambda}),\boldsymbol{\lambda}).
\]

The following lemma gives a computable formula for the smooth solution of
(\ref{HJBf}).

\begin{lemma}
\label{FKlemma} The function $f:\mathbb{R}_{+}^{m}\rightarrow\mathbb{R}_{+}$
defined by (\ref{HJBf}) is differentiable and given by the absolutely
convergent integral
\begin{equation}
\label{tildeV}f(\boldsymbol{\lambda})=\int_{0}^{\infty}e^{-\beta s}%
\mathbb{E}_{0,\boldsymbol{\lambda}}\left[  F(\boldsymbol{\lambda}_{s})\right]
ds\ .
\end{equation}

\end{lemma}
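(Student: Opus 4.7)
The approach is to establish, in order, (i)~absolute convergence of the integral defining $f$; (ii)~that the resulting $f$ satisfies~(\ref{HJBf}); and (iii)~differentiability of $f$ in $\boldsymbol{\lambda}$. I expect step~(iii) to be the principal obstacle.

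For step~(i), I would first bound $F$. Since $\boldsymbol{\omega}=0$ is admissible in~(\ref{eq:omegastar}) and yields $K_{l}=0$, the argmin satisfies $K_{l}(\boldsymbol{\omega}^{*},\boldsymbol{\lambda})\le 0$. Strict convexity of the quadratic term together with the blow-up of $-\log\bigl(1+(\boldsymbol{\omega}^{\prime}\boldsymbol{J})_{l}z\bigr)$ as $(\boldsymbol{\omega}^{\prime}\boldsymbol{J})_{l}z\downarrow -1$ forces $\boldsymbol{\omega}^{*}(\boldsymbol{\lambda})$ to lie in a compact set uniformly in $\boldsymbol{\lambda}$, from which $|F(\boldsymbol{\lambda})|\le M(1+\|\boldsymbol{\lambda}\|)$ for some $M>0$. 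By the Stationarity Assumption (Property~2), $\sup_{s\ge 0}\mathbb{E}_{0,\boldsymbol{\lambda}}\bigl[\|\boldsymbol{\lambda}_{s}\|^{2}\bigr]$ is finite and grows polynomially in $\|\boldsymbol{\lambda}\|$, so Fubini together with the exponential discount $e^{-\beta s}$ secures absolute convergence of~(\ref{tildeV}) and polynomial growth of $f$.

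For step~(ii), I would exploit the Markov property of $\boldsymbol{\lambda}_{t}$: conditioning in the definition of $f$ and changing variables gives
\begin{equation*}
\mathbb{E}_{0,\boldsymbol{\lambda}}\bigl[e^{-\beta t}f(\boldsymbol{\lambda}_{t})\bigr]=f(\boldsymbol{\lambda})-\int_{0}^{t}e^{-\beta s}\mathbb{E}_{0,\boldsymbol{\lambda}}\bigl[F(\boldsymbol{\lambda}_{s})\bigr]\,ds,
\end{equation*}
while Dynkin's formula~(\ref{Dynkin}) applied to $g(\boldsymbol{n},\boldsymbol{\lambda})=f(\boldsymbol{\lambda})$, combined with the product rule for $e^{-\beta t}$, yields the parallel identity with $\int_{0}^{t}e^{-\beta s}\mathbb{E}_{0,\boldsymbol{\lambda}}\bigl[[\mathcal{A}f](\boldsymbol{\lambda}_{s})-\beta f(\boldsymbol{\lambda}_{s})\bigr]\,ds$ on the right. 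Matching the two integrands for every $t\ge 0$ and localising at $t=0$ recovers~(\ref{HJBf}) pointwise.

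Step~(iii) is the crux, and my plan is to leverage the affine structure. From~(\ref{eq:Hawkes_intensity}) the dependence of $\boldsymbol{\lambda}_{s}$ on the initial condition splits into an explicit deterministic linear piece $e^{-\alpha_{l}s}\lambda_{l,0}$ plus a random jump-driven piece whose law depends smoothly on $\boldsymbol{\lambda}_{0}$, as witnessed by the exponential-affine characteristic function in Property~3, which is $C^{\infty}$ in $\boldsymbol{\lambda}_{0}$ for each fixed $s$. I would then couple two copies of the intensity process started from $\boldsymbol{\lambda}$ and $\boldsymbol{\lambda}+h\boldsymbol{e}_{l}$ via a common Poisson thinning construction on $[0,\infty)\times[0,\infty)$, so that the two paths differ only by the deterministic drift plus an excess jump set whose expected total mass through time $s$ is $O(h)$, bounded uniformly in $s$ under the stationarity of Property~2. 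Combined with the linear growth of $F$, this delivers $\bigl|\mathbb{E}_{0,\boldsymbol{\lambda}+h\boldsymbol{e}_{l}}[F(\boldsymbol{\lambda}_{s})]-\mathbb{E}_{0,\boldsymbol{\lambda}}[F(\boldsymbol{\lambda}_{s})]\bigr|=O(h)$ with polynomial-in-$s$ constants; dominated convergence under $\int_{0}^{\infty}e^{-\beta s}(\cdots)\,ds$ then transfers differentiability to $f$, and continuity of $\partial f/\partial\lambda_{l}$ follows from the analogous second-difference bound.
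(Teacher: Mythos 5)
Your architecture (bound $F$, use stationarity for absolute convergence, differentiate under the integral, then Feynman--Kac/Dynkin to recover (\ref{HJBf})) is the same as the paper's, and your step (ii) is a correct unpacking of what the paper compresses into the phrase ``the Feynman--Kac formula implies it satisfies (\ref{HJBf}).'' However, step (i) contains a genuine error. The blow-up of $-\log\bigl(1+(\boldsymbol{\omega}^{\prime}\boldsymbol{J})_{l}z\bigr)$ only constrains $(\boldsymbol{\omega}^{\prime}\boldsymbol{J})_{l}$ from below (the long side, where the argument of the log approaches $0$); on the short side the term $-\lambda_{l}\int\log\bigl(1+(\boldsymbol{\omega}^{\prime}\boldsymbol{J})_{l}z\bigr)\nu_{l}(dz)$ tends to $-\infty$ and rewards ever larger positions as $\lambda_{l}$ grows, so $\boldsymbol{\omega}^{\ast}(\boldsymbol{\lambda})$ is \emph{not} confined to a compact set uniformly in $\boldsymbol{\lambda}$. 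This is visible in the explicit solution (\ref{eq:closed_form_omega_msector}) or the first-order condition (\ref{eq:FOC_Log_msector}): $\bar{\omega}_{l}^{\ast}$ grows like $\sqrt{\lambda_{l,t}}$ as $\lambda_{l,t}\to\infty$, and consequently $K_{l}(\boldsymbol{\omega}^{\ast}(\boldsymbol{\lambda}),\boldsymbol{\lambda})$ grows like $\lambda_{l}\log\lambda_{l}$, so your linear bound $|F(\boldsymbol{\lambda})|\leq M(1+\Vert\boldsymbol{\lambda}\Vert)$ is false. The paper proves only the quadratic bound (\ref{Fbound}), $|F(\lambda)|\leq\tilde{M}(1+\Vert\lambda\Vert^{2})$, which does hold and, paired with the boundedness of second moments under the Stationarity Assumption, still gives absolute convergence of (\ref{tildeV}); so the gap is repairable, but the bound must be weakened and the comparison with $\boldsymbol{\omega}=0$ gives only the one-sided inequality $K_{l}(\boldsymbol{\omega}^{\ast},\boldsymbol{\lambda})\leq0$, not the compactness you infer from it.

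On step (iii), your coupling construction is a genuinely different (and more explicit) route than the paper, which simply asserts via the ergodic property that $\frac{\partial}{\partial\lambda_{\ell}}\mathbb{E}_{0,\boldsymbol{\lambda}}[F(\boldsymbol{\lambda}_{s})]$ decays and that the differentiated integral converges absolutely. Two caveats, though: a uniform estimate $\bigl|\mathbb{E}_{0,\boldsymbol{\lambda}+h\boldsymbol{e}_{l}}[F(\boldsymbol{\lambda}_{s})]-\mathbb{E}_{0,\boldsymbol{\lambda}}[F(\boldsymbol{\lambda}_{s})]\bigr|=O(h)$ yields only local Lipschitz continuity of $f$, not differentiability --- you still must show the difference quotient actually converges for each fixed $s$ (e.g.\ by identifying the first-order term of the excess-jump contribution in your thinning construction, or by exploiting the smoothness in $\boldsymbol{\lambda}_{0}$ of the affine characteristic function, which requires an inversion argument to pass from the transform to $\mathbb{E}[F(\boldsymbol{\lambda}_{s})]$ for a quadratically growing $F$); and the $O(h)$ constants must be re-examined once $F$ is only quadratically bounded, since the coupling error then enters through second moments of the excess jumps rather than first moments.
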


\begin{proof}
By the ergodic property, we know both that $\mathbb{E}_{0,\boldsymbol{\lambda
}}\left[  F(\boldsymbol{\lambda}_{s})\right]  \rightarrow\mathbb{E}%
_{0,\boldsymbol{\lambda}}\left[  F(\boldsymbol{\lambda}_{\infty})\right]  $
and \newline$\frac{\partial}{\partial\lambda_{\ell}}\mathbb{E}%
_{0,\boldsymbol{\lambda}}\left[  F(\boldsymbol{\lambda}_{s})\right]
\rightarrow0$ as $s\to\infty$. One can also verify directly that there is a
constant $\tilde{M}>0$ such that
\begin{equation}
|F(\lambda)|\leq\tilde{M}(1+\Vert\lambda\Vert^{2})\ . \label{Fbound}%
\end{equation}
From these facts follows the absolute convergence both of the integral in
(\ref{tildeV}) and the integral
\[
\frac{\partial f}{\partial\lambda_{\ell}}=\int_{0}^{\infty}e^{-\beta s}%
\frac{\partial}{\partial\lambda_{\ell}}\mathbb{E}_{0,\boldsymbol{\lambda}%
}\left[  F(\boldsymbol{\lambda}_{s})\right]  ds\ .
\]
Since the right hand side of (\ref{tildeV}) is differentiable, the Feynman-Kac
formula implies it satisfies (\ref{HJBf}).
\end{proof}

\subsection{A Verification Result for the Log Investor\label{ssec:logverif}}

The following verification theorem follows the logic outlined in Section III.9
of \cite{flemingsoner} and ensures that the above argument correctly
characterizes both the optimal strategy and the associated value function. A
more general verification result for log investors can be found in
\cite{GollKallsen00}.

\begin{theorem}
Consider the optimal problem (\ref{eq:V}) for the log investor with impatience
parameter $\beta>0$, investing in the asset price model defined by
(\ref{eq:S0}), (\ref{eq:dSi}), (\ref{lambdaDE}) and satisfying the
Stationarity Assumption that the matrix $\Gamma=(\alpha_{j}\delta_{ij}%
-d_{ij})$ is positive.

\begin{enumerate}
\item The candidate solution $\tilde{V}(x,\boldsymbol{\lambda},t)=e^{-\beta
t}\left[  f(\boldsymbol{\lambda})+\beta^{-1}\log(x)\right]  $ is a classical
(i.e. differentiable) solution of the HJB equation (\ref{eq:HJBforV}).

\item For all initial conditions $x>0,\boldsymbol{\ \lambda}\in\mathbb{R}%
_{+}^{n}$, the pair of processes $(\boldsymbol{\omega}_{t}^{\ast},C_{t}^{\ast
}),t\geq0$ defined by (\ref{eq:omegastar}) and (\ref{eq:Cstar_withK}) is an
\textit{admissible policy}, in the sense that they are progressively
measurable and the process $X_{t}^{\ast}$ remains finite and positive
$(t,\omega)$ almost surely and solves the appropriate SDE.

\item Let $\mathcal{C}$ denote the class of admissible policies
$(\boldsymbol{\omega}_{t},C_{t}),t\geq0$ that satisfy
\begin{equation}
\lim_{t\rightarrow\infty}\mathbb{E}_{x,\boldsymbol{\lambda}}\left[  \tilde
{V}\left(  X_{t},\boldsymbol{\lambda}_{t},t\right)  \right]  \geq0 \label{TC+}%
\end{equation}
For any $(\boldsymbol{\omega},C)\in\mathcal{C}$,
\begin{equation}
\mathbb{E}_{x,\boldsymbol{\lambda}}\left[  \int_{0}^{\infty}e^{-\beta
s}U(C_{s})ds\right]  \leq\tilde{V}(x,\boldsymbol{\lambda},0).
\end{equation}

\item Let $V_{AS}$ denote the value function
\begin{equation}
V_{AS}\left(  x,\boldsymbol{\lambda}\right)  =\max_{(C,\boldsymbol{\omega}%
)\in\mathcal{C}}\mathbb{E}_{x,\boldsymbol{\lambda}}\left[  \int_{0}^{\infty
}e^{-\beta s}U(C_{s})ds\right]  \ . \label{eq:VAS}%
\end{equation}
The optimal policy $(\boldsymbol{\omega}_{t}^{\ast},C_{t}^{\ast}),t\geq0$
satisfies
\begin{equation}
\lim_{t\rightarrow\infty}\mathbb{E}_{x,\boldsymbol{\lambda}}\left[  \tilde
{V}\left(  X_{t}^{\ast},\boldsymbol{\lambda}_{t},t\right)  \right]  =0
\label{TC}%
\end{equation}
and the equality
\begin{equation}
\mathbb{E}_{x,\boldsymbol{\lambda}}\left[  \int_{0}^{\infty}e^{-\beta
s}U(C_{s}^{\ast})ds\right]  =\tilde{V}(x,\boldsymbol{\lambda},0).
\end{equation}
Hence $\tilde{V}(x,\boldsymbol{\lambda},0)=V_{AS}(x,\boldsymbol{\lambda})$ and
$(\boldsymbol{\omega}^{\ast},C^{\ast})$ is the optimal portfolio policy in the
class $\mathcal{C}$.
\end{enumerate}
\end{theorem}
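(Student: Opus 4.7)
The plan is to handle the four assertions in order, with the main work being a standard verification argument based on applying It\^o's formula to $\tilde{V}$ along admissible wealth trajectories, so that the HJB equation (\ref{eq:HJBforV}) furnishes a systematic inequality that becomes an equality precisely at the candidate optimizer.

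For Part 1, I would simply assemble what has already been done. Lemma \ref{FKlemma} gives a differentiable $f:\mathbb{R}_+^m\to\mathbb{R}_+$ solving (\ref{HJBf}) with nonhomogeneous term $F(\boldsymbol{\lambda})=1-r/\beta-\log\beta+\beta^{-1}K_l(\boldsymbol{\omega}^*(\boldsymbol{\lambda}),\boldsymbol{\lambda})$. Plugging $L(x,\boldsymbol{\lambda})=f(\boldsymbol{\lambda})+\beta^{-1}\log x$ and $C^*=\beta x$, $\boldsymbol{\omega}^*$ into (\ref{eq:HJBgeneral}) reproduces precisely (\ref{HJBf}), and multiplying by $e^{-\beta t}$ recovers (\ref{eq:HJBforV}), with $\max$ attained at $(C^*,\boldsymbol{\omega}^*)$ by convexity of $K_l$.

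For Part 2, I would argue that $\boldsymbol{\omega}^*(\boldsymbol{\lambda})$ is finite and continuous in $\boldsymbol{\lambda}$ (as the unique minimizer of the strictly convex $K_l$), so $\boldsymbol{\omega}^*_t=\boldsymbol{\omega}^*(\boldsymbol{\lambda}_{t-})$ is progressively measurable and c\`agl\`ad. The presence of the $\log\left(1+(\boldsymbol{\omega}^{\prime}\boldsymbol{J})_l z\right)$ term in $K_l$ forces any minimizer to satisfy $1+(\boldsymbol{\omega}^{*\prime}\boldsymbol{J})_l z>0$ for $z$ in the support of $\nu_l$, so every jump of the wealth SDE (\ref{eq:dX}) is strictly bounded below by $-X_{t-}$, guaranteeing $X_t^*>0$ almost surely. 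Since the diffusion coefficient is linear in $X$ and the consumption rate is $\beta X$, pathwise existence and finiteness of $X^*$ follow by the usual semimartingale construction, so $(\boldsymbol{\omega}^*,C^*)$ is admissible.

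For Part 3, fix any $(\boldsymbol{\omega},C)\in\mathcal{C}$ and apply It\^o's formula to $\tilde{V}(X_t,\boldsymbol{\lambda}_t,t)$, using (\ref{eq:dX}) and (\ref{lambdaDE}) together with the Dynkin-style compensation (\ref{Dynkin}) for the jump part. The drift of $\tilde{V}(X_t,\boldsymbol{\lambda}_t,t)+\int_0^t e^{-\beta s}U(C_s)\,ds$ equals the expression inside the $\max$ in (\ref{eq:HJBforV}) at $(\boldsymbol{\omega}_t,C_t)$, which by Part 1 is bounded above by $0$. After localizing by a stopping time $\tau_n\uparrow\infty$ to control local martingale terms and taking expectations, I would obtain
\begin{equation*}
\mathbb{E}_{x,\boldsymbol{\lambda}}\left[\tilde{V}(X_{t\wedge\tau_n},\boldsymbol{\lambda}_{t\wedge\tau_n},t\wedge\tau_n)\right]+\mathbb{E}_{x,\boldsymbol{\lambda}}\left[\int_0^{t\wedge\tau_n}e^{-\beta s}U(C_s)\,ds\right]\leq \tilde{V}(x,\boldsymbol{\lambda},0).
\end{equation*}
Sending $n\to\infty$ and then $t\to\infty$, applying the admissibility assumption (\ref{TC+}) on the boundary term and monotone/dominated convergence on the integral, yields the desired upper bound.

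For Part 4, running the same It\^o calculation at $(\boldsymbol{\omega}^*,C^*)$ turns every inequality into an equality, so it suffices to verify (\ref{TC}). Here lies the main obstacle: $\tilde{V}(X_t^*,\boldsymbol{\lambda}_t,t)=e^{-\beta t}[f(\boldsymbol{\lambda}_t)+\beta^{-1}\log X_t^*]$, so I must show that both $e^{-\beta t}\mathbb{E}[f(\boldsymbol{\lambda}_t)]$ and $e^{-\beta t}\mathbb{E}[\log X_t^*]$ vanish as $t\to\infty$. The first follows from the bound (\ref{Fbound}) combined with the Stationarity Assumption, which gives $\mathbb{E}[f(\boldsymbol{\lambda}_t)]=O(1)$ via the ergodic estimate (\ref{ergodic}) and the representation (\ref{tildeV}). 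For the second, applying It\^o to $\log X_t^*$ under (\ref{eq:dX}) with $C_t^*=\beta X_t^*$ gives $\log X_t^*=\log x+\int_0^t g(\boldsymbol{\lambda}_s)\,ds+\mathrm{martingale}$, where $g(\boldsymbol{\lambda})$ grows at most linearly in $\boldsymbol{\lambda}$; stationarity then forces $\mathbb{E}[\log X_t^*]$ to grow at most linearly, which the factor $e^{-\beta t}$ damps to zero. This also implies the candidate policy lies in $\mathcal{C}$, so combining with Part 3 delivers $\tilde{V}(x,\boldsymbol{\lambda},0)=V_{AS}(x,\boldsymbol{\lambda})$ and optimality of $(\boldsymbol{\omega}^*,C^*)$.
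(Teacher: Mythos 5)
Your proposal is correct and follows essentially the same route as the paper: Part 1 from Lemma \ref{FKlemma}, Part 3 via the Dynkin/It\^o argument combined with the HJB inequality and the transversality condition (\ref{TC+}), and Part 4 by turning the inequality into an equality at the optimizer and checking (\ref{TC}) through the ergodic bound on $f(\boldsymbol{\lambda}_t)$ and the explicit drift of $\log X_t^*$, which grows at most linearly in $t$ and is killed by $e^{-\beta t}$. Your treatment of Part 2 (positivity of $X^*$ forced by the logarithmic barrier in $K_l$) and the explicit localization in Part 3 are somewhat more detailed than the paper's, but they are refinements of the same argument rather than a different approach.
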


\begin{proof}
That $\tilde V$ is a classical solution of (\ref{eq:HJBforV}) follows from
Lemma \ref{FKlemma} which shows that $f(\lambda)$ is a differentiable solution
of (\ref{HJBf}). That $(\boldsymbol{\omega}_{t}^{\ast},C_{t}^{\ast}),t\geq0$
is admissible follows by general considerations.

Suppose $(\boldsymbol{\omega},C)\in\mathcal{C}$ and consider the process
$\xi_{s}=\tilde{V}(X_{s},\boldsymbol{\lambda}_{s},s)$. For any $t>0$, the
Dynkin formula implies that
\begin{align}
\mathbb{E}_{x,\boldsymbol{\lambda}}\left[  \xi_{t}\right]   &  =\xi_{0}%
+\int_{0}^{t}\mathbb{E}_{x,\boldsymbol{\lambda}}\Biggl[-\beta\tilde{V}%
(X_{s},\boldsymbol{\lambda}_{s},s)+\sum_{l=1}^{m}\alpha_{l}\left(
\lambda_{l,\infty}-\lambda_{l,s}\right)  \frac{\partial\tilde{V}\left(
X_{s},\boldsymbol{\lambda}_{s},s\right)  }{\partial\lambda_{l}}\nonumber\\
&  +\frac{\partial\tilde{V}\left(  X_{s},\boldsymbol{\lambda}_{s},s\right)
}{\partial x}\left(  rX_{s}+\boldsymbol{\omega}_{s}^{\prime}\boldsymbol{R}%
X_{s}-C_{s}\right)  +\frac{1}{2}\frac{\partial^{2}\tilde{V}\left(
X_{s},\boldsymbol{\lambda}_{s},s\right)  }{\partial x^{2}}\boldsymbol{\omega
}_{s}^{\prime}\boldsymbol{\Sigma\omega}_{s}X_{s}^{2}\\
&  +\sum_{l=1}^{m}\lambda_{l,s-}\int\left[  \tilde{V}\left(  X_{s}+\left(
\boldsymbol{\omega}_{s}^{\prime}\boldsymbol{J}\right)  _{l}zX_{s}%
,\boldsymbol{\lambda_{s-}}+\boldsymbol{d}_{l},s\right)  -\tilde{V}\left(
X_{s},\boldsymbol{\lambda}_{s-},s\right)  \right]  \nu_{l}\left(  dz\right)
\Biggr]ds\nonumber
\end{align}
Using the fact that $\tilde{V}$ solves the HJB equation (\ref{eq:HJBforV})
leads in the usual way to the inequality%

\begin{equation}
\label{verificationInt}\mathbb{E}_{x,\boldsymbol{\lambda}}\left[  \xi
_{t}\right]  \leq\xi_{0}-\mathbb{E}_{x,\boldsymbol{\lambda}}\left[  \int
_{0}^{t}e^{-\beta s}U(C_{s})ds\right]
\end{equation}
Finally, one can use the transversality condition (\ref{TC+}) to take the
limit $t\rightarrow\infty$ and obtain the desired result
\[
\mathbb{E}_{x,\boldsymbol{\lambda}}\left[  \int_{0}^{\infty}e^{-\beta
s}U(C_{s})ds\right]  \leq\xi_{0}=\tilde{V}(x,\lambda,0).
\]

The strategy $(\boldsymbol{\omega}_{t}^{\ast},C_{t}^{\ast}),t\geq0$ maximizes
(\ref{eq:HJBforV}) for almost every $(t,\omega)$, which means
(\ref{verificationInt}) holds as an equality for every $t>0$. Finally, one
needs to verify (\ref{TC}) in order to conclude that $\xi_{0}=\tilde{V}%
_{0}=V_{AS}$. First, by the ergodic property (\ref{ergodic}) one has
\[
\lim_{t\rightarrow\infty}\mathbb{E}_{x,\boldsymbol{\lambda}}\left[  e^{-\beta
t}|f(\boldsymbol{\lambda}_{t})|\right]  =0
\]
Next, by plugging in the optimal consumption $C^{\ast}=\beta X^{\ast}$ one
then finds one can solve the problem of optimizing the expected utility of
terminal wealth over the interval $[0,t]$ with the adjusted interest rate
$\hat{r}=r-\beta$ to show that
\[
\mathbb{E}_{x,\boldsymbol{\lambda}}\left[  e^{-\beta t}\log(X_{t}^{\ast
})\right]  =[te^{-\beta t}]\cdot\frac{1}{t}\int_{0}^{t}\mathbb{E}%
_{x,\boldsymbol{\lambda}}[\hat{r}+K_{l}(\boldsymbol{\omega}^{\ast
}(\boldsymbol{\lambda_{s}}),\boldsymbol{\lambda_{s}})]ds
\]
From the bounds (\ref{Fbound}) on the functions $F,K$ this can be seen to
converge to $0$ as $t\rightarrow\infty$, again by the ergodic property
(\ref{ergodic}).
\end{proof}

\subsection{Properties of the Solution}

The log investor is often described as \textquotedblleft
myopic\textquotedblright\ because she acts at each moment in time as if the
dynamical variables are in fact static parameters. Thus we should not be
surprised that the optimal consumption and portfolio weights at any time given
by (\ref{eq:Cstar_withK}) and (\ref{eq:omegastar}) are independent of the
coefficients of the SDEs driving the asset returns. In particular, at any time
$t$, the policy $(\boldsymbol{\omega}_{t}^{\ast},C_{t}^{\ast}%
,\boldsymbol{\lambda}_{t})$ is precisely the same as the policy when the jump
frequency is treated as a constant $\lambda=\lambda_{t}$, although that
\textquotedblleft constant\textquotedblright\ is changed at each instant.

A second observation is that the policy $(\boldsymbol{\omega}_{t}^{\ast}%
,C_{t}^{\ast},\boldsymbol{\lambda}_{t}),t\geq0$ does not require knowledge of
the function $f(\lambda)$. However, determining the function $f$ requires
solving the non-homogeneous equation (\ref{HJBf}), or equivalently evaluating
the integral in (\ref{tildeV}). For this, we need some further structure on
the problem, which we now add in order to derive the complete form of the solution.

\section{Additional Structure on the Diffusive and Jump
Risks\label{sec:sector}}

To find interesting examples of closed form optimal portfolio solutions, it is
convenient to model the variance-covariance matrix $\Sigma$ of the diffusive
part of asset returns in such a way that its inverse is explicit. For this
purpose, we adopt a modelling approach that is common in asset pricing, namely
to assume a factor structure. That is, we specify a block-structure for
$\Sigma$ consisting of $k$ blocks of dimension $m,$ with $n=mk$:
\begin{equation}
\underset{n\times n}{\boldsymbol{\Sigma}}=\boldsymbol{\sigma\sigma}^{\prime
}=\left(
\begin{array}
[c]{ccc}%
\Sigma_{1,1} & \Sigma_{1,2} & \cdots\\
\Sigma_{2,1} & \ddots & \Sigma_{2,m}\\
\cdots & \Sigma_{m,m-1} & \Sigma_{m,m}%
\end{array}
\right)  \label{eq:Sigma_msector}%
\end{equation}
with diagonal (or within-asset class) blocks%
\begin{equation}
\underset{k\times k}{\boldsymbol{\Sigma}_{l,l}}=\upsilon_{l}^{2}\left(
\begin{array}
[c]{ccc}%
1 & \rho_{l,l} & \cdots\\
\rho_{l,l} & \ddots & \rho_{l,l}\\
\cdots & \rho_{l,l} & 1
\end{array}
\right)  \label{eq:Sigmawithin}%
\end{equation}
and off-diagonal (or across-asset class) blocks%
\begin{equation}
\underset{k\times k}{\boldsymbol{\Sigma}_{l,s}}=0 \label{eq:Sigmaacross}%
\end{equation}
where $1>\rho_{l,l}>-1/(k-1).$

The spectral decomposition of the $\Sigma$ matrix is
\begin{equation}
\boldsymbol{\Sigma}\text{ \ }=\underset{=\text{ \ }\bar{\Sigma}}{\text{
\ }\underbrace{\sum\nolimits_{l=1}^{m}\kappa_{1l}\frac{1}{k}\boldsymbol{1}%
_{l}\boldsymbol{1}_{l}^{\prime}}}\text{ \ }+\text{ \ }\underset{=\text{
\ }\Sigma^{\perp}}{\underbrace{\sum\nolimits_{l=1}^{m}\kappa_{2l}\left(
\boldsymbol{M}_{l}\boldsymbol{-}\frac{1}{k}\boldsymbol{1}_{l}\boldsymbol{1}%
_{l}^{\prime}\right)  }} \label{eq:Sigma_msector_spectral}%
\end{equation}
where%
\begin{align}
\kappa_{1l}  &  =v_{l}^{2}\left(  1+\left(  k-1\right)  \rho_{l,l}\right)
\label{eq:kappa1l_msector}\\
\kappa_{2l}  &  =v_{l}^{2}\left(  1-\rho_{l,l}\right)
\label{eq:kappa2l_msector}%
\end{align}
are the $2m$ distinct eigenvalues of $\Sigma$. The multiplicity of each
$\kappa_{1l}$ is $1$, and the multiplicity of each $\kappa_{2l}$ is $k-1$. The
eigenvector for $\kappa_{1l}$ is $\boldsymbol{1}_{l}$, the $n-$vector with
ones placed in the $k$ rows corresponding to the $l-$block and zeros
everywhere else, that is%
\begin{equation}
\boldsymbol{1}_{l}=[0,\ldots,0,\underset{\text{asset class }l}{\underbrace
{1,\ldots,1}},0,\ldots,0]^{\prime}, \label{eq:1l}%
\end{equation}
where the first $1$ is located in the $k\left(  l-1\right)  +1$ coordinate.
$\boldsymbol{M}_{l}$ is an $n\times n$ block diagonal matrix with a $k\times
k$ identity matrix $\boldsymbol{I}_{k}$ placed in the $l-$block and zeros
everywhere else:%
\begin{equation}
\underset{n\times n}{\boldsymbol{M}_{l}}\text{ \ }=\left(
\begin{array}
[c]{ccc}%
0 & \cdots & 0\\
\vdots & \boldsymbol{I}_{k} & \vdots\\
0 & \cdots & 0
\end{array}
\right)  , \label{eq:Fl}%
\end{equation}
Corresponding to the above spectral structure, we have the orthogonal
decomposition $\mathbb{R}^{n}=\bar{V}\oplus V^{\perp}$ where $\bar{V}$ is the
span of $\{\boldsymbol{1}_{l}\}_{l=1,..,m}$ and $V^{\perp}$ is the orthogonal space.

As for the vector $\boldsymbol{J}$ of jump amplification coefficients, we
assume that
\begin{equation}
\underset{n\times m}{\boldsymbol{J}}\ \boldsymbol{=}\ [\boldsymbol{J}%
_{1},...,\boldsymbol{J}_{m}]=\left(
\begin{array}
[c]{ccc}%
J_{1,1} & \cdots & J_{1,m}\\
\vdots & \ddots & \vdots\\
J_{n,1} & \cdots & J_{n,m}%
\end{array}
\right)  \label{eq:Jmat}%
\end{equation}
where
\begin{equation}
\boldsymbol{J}_{l}\text{ \ }=\text{ }j_{l}\boldsymbol{1}_{l}\text{ \ }=\text{
\ }[\underset{\text{asset class }1}{\underbrace{0,\ldots,0}},\ldots
,\underset{\text{asset class }l}{\underbrace{j_{l},\ldots,j_{l}}}%
,\ldots\underset{\text{asset class }m}{\underbrace{,0,\ldots,0}}]^{\prime}
\label{eq:J_msector}%
\end{equation}
for $l=1,...,m.$ This structure means that assets within a given class $l$
have the same response to the arrival of a jump, i.e., to a change in
$\boldsymbol{N}_{t}$. But the proportional response $j_{l}$ of assets of
different classes to the arrival of a jump can be different ($j_{l}\neq j_{h}$
for $l\neq h$).

Finally, we assume that the vector of expected excess returns has the form
\begin{equation}
\boldsymbol{R}\text{ \ }\boldsymbol{=}\text{ \ }\sum\nolimits_{l=1}^{m}\bar
{R}_{l}\boldsymbol{1}_{l}\text{ }\boldsymbol{+}\text{ }\boldsymbol{{R}^{\perp
}}\text{ \ }\boldsymbol{=\ \bar{R}+{R}^{\perp}.} \label{eq:R_msector}%
\end{equation}
Here, we allow the expected excess returns to differ both within and across
asset classes, by allowing $\boldsymbol{{R}^{\perp}\neq0}$ . The components of
$\boldsymbol{R}$ play the role of the assets' alphas. The general
$\boldsymbol{{R}^{\perp}}$ is orthogonal to each $\boldsymbol{1}_{l}$ and has
the form
\[
\boldsymbol{{R}^{\perp}=[R}_{1}^{\perp\prime},...,\boldsymbol{R}_{m}%
^{\perp\prime}]^{\prime}%
\]
where each of the $k$-vectors $\boldsymbol{R}_{l}^{\perp}$ is orthogonal to
the $k$-vector $\boldsymbol{1}$.

\subsection{Closed-Form Optimal Portfolio Solution}

We now look for a vector of optimal portfolio weights $\boldsymbol{\omega,}$
and it is convenient to look for it in the form of its decomposition using the
same basis as above,
\begin{equation}
\boldsymbol{\omega}\text{ \ }\boldsymbol{=}\text{ \ }\sum\nolimits_{l=1}%
^{m}\bar{\omega}_{l}\boldsymbol{1}_{l}\text{ }\boldsymbol{+}\text{
}\boldsymbol{{\omega}^{\perp}=\bar{\omega}+{\omega}^{\perp}}
\label{eq:Omega_msector}%
\end{equation}
where $\boldsymbol{\omega}^{\perp}=\boldsymbol{[\omega}_{1}^{\perp\prime
},...,\boldsymbol{\omega}_{l}^{\perp\prime}]^{\prime}.$ The objective function
$K_{2}(\boldsymbol{\omega})$ in (\ref{eq:omegastar}) to be minimized reduces
to
\begin{align*}
K_{2}(\boldsymbol{\omega})  &  =\left\{  -\boldsymbol{\omega}^{\prime
}\boldsymbol{R}+\frac{1}{2}\boldsymbol{\omega}^{\prime}\boldsymbol{\Sigma
\omega}-\sum_{l=1}^{n}\lambda_{l}\int\log\left(  1+\left(  \boldsymbol{\omega
}^{\prime}\boldsymbol{J}\right)  _{l}z\right)  \nu_{l}\left(  dz\right)
\right\} \\
&  =-\boldsymbol{\omega}^{\perp\prime}\boldsymbol{R}^{\perp}-\sum
\nolimits_{l=1}^{m}\bar{\omega}_{l}\bar{R}_{l}\boldsymbol{1}_{l}^{\prime
}\boldsymbol{1}_{l}\\
&  +\frac{1}{2}\boldsymbol{\omega}^{\perp\prime}\Sigma^{\perp}%
\boldsymbol{\omega}^{\perp}+\frac{1}{2}\sum\limits_{l=1}^{m}\bar{\omega}%
_{l}^{2}\kappa_{1l}\frac{1}{k}\boldsymbol{1}_{l}^{\prime}\boldsymbol{1}%
_{l}\boldsymbol{1}_{l}^{\prime}\boldsymbol{1}_{l}\\
&  -\sum_{l=1}^{n}\lambda_{l}\int\log\left(  1+k\bar{\omega}_{l}j_{l}z\right)
\nu_{l}\left(  dz\right)  .
\end{align*}
The minimization problem then separates as%
\begin{equation}
\left(  \boldsymbol{\omega}^{\perp\ast},\boldsymbol{\bar{\omega}}^{\ast
}\right)  =\operatorname{argmin}_{\left\{  \boldsymbol{\omega}^{\perp
},\boldsymbol{\bar{\omega}}\right\}  }\left\{  g^{\perp}(\boldsymbol{\omega
}^{\perp})+\bar{g}(\bar{\omega})\right\}  \label{eq:omegaseparate_msector}%
\end{equation}
where%
\begin{align}
g^{\perp}(\boldsymbol{\omega}^{\perp})  &  =-\boldsymbol{\omega}^{\perp\prime
}\boldsymbol{R}^{\perp}+\frac{1}{2}\boldsymbol{\omega}^{\perp\prime}%
\Sigma^{\perp}\boldsymbol{\omega}^{\perp}\label{eq:gperp_msector}\\
\bar{g}(\boldsymbol{\bar{\omega}})  &  =-k\sum\limits_{l=1}^{m}\bar{\omega
}_{l}\bar{R}_{l}+\frac{1}{2}k\sum\limits_{l=1}^{m}\bar{\omega}_{l}^{2}%
\kappa_{1l}-\sum_{l=1}^{m}\lambda_{l,t}\int\log\left(  1+k\bar{\omega}%
_{l}j_{l}z\right)  \nu_{l}\left(  dz\right)  . \label{eq:gbar_msector}%
\end{align}

For the first part, minimizing $g^{\perp}(\boldsymbol{\omega}^{\perp}),$ the
structure of $\Sigma^{\perp}$ implies that
\begin{align*}
g^{\perp}(\boldsymbol{\omega}^{\perp})  &  =-\boldsymbol{\omega}^{\perp\prime
}\boldsymbol{R}^{\perp}+\frac{1}{2}\sum\nolimits_{l=1}^{m}\kappa
_{2l}\boldsymbol{\omega}^{\perp\prime}\left(  \boldsymbol{M}_{l}%
\boldsymbol{-}\frac{1}{k}\boldsymbol{1}_{l}\boldsymbol{1}_{l}^{\prime}\right)
\boldsymbol{\omega}^{\perp}\\
&  =-\boldsymbol{\omega}^{\perp\prime}\boldsymbol{R}^{\perp}+\frac{1}{2}%
\sum\nolimits_{l=1}^{m}\kappa_{2l}\boldsymbol{\omega}^{\perp\prime
}\boldsymbol{M}_{l}\boldsymbol{\omega}^{\perp}\\
&  =-\sum\nolimits_{l=1}^{m}\boldsymbol{\omega}_{l}^{\perp\prime
}\boldsymbol{R}_{l}^{\perp}+\frac{1}{2}\sum\nolimits_{l=1}^{m}\kappa
_{2l}\boldsymbol{\omega}_{l}^{\perp\prime}\boldsymbol{\omega}_{l}^{\perp}%
\end{align*}
and therefore the optimal solution $\boldsymbol{\omega}^{\perp\ast}$ has
blocks
\begin{equation}
\boldsymbol{\omega}_{l}^{\perp\ast}=\frac{1}{\kappa_{2l}}\boldsymbol{R}%
_{l}^{\perp} \label{eq:omega_perp_star}%
\end{equation}
for $l=1,\ldots,m.$ This part of the solution depends only on the diffusive
characteristics (expected returns and variance-covariance)\ of the asset returns.

The problem of minimizing $\bar{g}(\boldsymbol{\bar{\omega}})$ separates
itself into $m$ separate minimization problems. With the change of variable
\begin{equation}
\varpi_{ln}=k\bar{\omega}_{l} \label{eq:changeofvar}%
\end{equation}
we see that%
\begin{equation}
\varpi_{ln}^{\ast}=\operatorname{argmin}_{\left\{  \varpi_{ln}\right\}
}\left\{  -\varpi_{ln}R_{l}+\frac{1}{2}\varpi_{ln}^{2}\kappa_{1l}%
/k-\lambda_{l,t}\int\log\left(  1+\varpi_{ln}j_{l}z\right)  \nu_{l}\left(
dz\right)  \right\}  \label{eq:ystarn_msector}%
\end{equation}
The convexity of the objective function implies the existence of the
minimizer. We can then determine $\varpi_{ln}^{\ast}$ in closed form.

Two cases are explicitly solvable. We first consider the case where the jump
size is deterministic. In this situation, $\nu_{l}(dz)=\delta\left(  z=\bar
{z}_{l}\right)  $ and the objective functions become:%
\begin{align}
f_{n}\left(  \varpi\right)   &  =-\sum\nolimits_{l=1}^{m}\varpi_{ln}\bar
{R}_{l}+\frac{1}{2}\sum\nolimits_{l=1}^{m}\varpi_{ln}^{2}\kappa_{1l}%
/k\nonumber\\
&  -\sum_{l=1}^{m}\lambda_{l,t}\log\left(  1+\varpi_{ln}j_{l}\bar{z}%
_{l}\right)  . \label{eq:objmsector_example}%
\end{align}
The first order conditions for the asset allocation parameters $\varpi_{ln}$
are given by%
\begin{equation}
-\bar{R}_{l}+\varpi_{ln}\kappa_{1l}/k-\lambda_{l,t}j_{l}\bar{z}_{l}\left(
1+\varpi_{ln}j_{l}\bar{z}_{l}\right)  ^{-1}=0\text{ for }l=1,\ldots,m.
\label{eq:FOC_Log_msector}%
\end{equation}
These first order conditions form a system of $m$ independent quadratic
equations. Each separate equation (\ref{eq:FOC_Log_msector}) admit a unique
solution $\varpi_{ln}$ satisfying the solvency constraint $\varpi_{ln}%
j_{l}\bar{z}_{l}>-1$. These are solvable in closed form:%
\begin{equation}
\bar{\omega}_{l}^{\ast}=\frac{\varpi_{ln}^{\ast}}{k}=\frac{-\kappa
_{1l}/k+j_{l}\bar{z}_{l}\bar{R}_{l}+\sqrt{\left(  j_{l}\bar{z}_{l}\bar{R}%
_{l}+\kappa_{1l}/k\right)  ^{2}+4\lambda_{l,t}j_{l}^{2}\bar{z}_{l}^{2}%
\kappa_{1l}/k}}{2j_{l}\bar{z}_{l}\kappa_{1l}/k^{2}}.
\label{eq:closed_form_omega_msector}%
\end{equation}

A second case that is solvable in closed form is one where each jump term
$Z_{l,t}$ has a binomial distribution ($u_{l}$ with probability $p_{l}$ or
$d_{l}$ with probability $1-p_{l}$), since the corresponding first-order
conditions are cubic. The first order conditions are obtained by
differentiation with respect to $\varpi_{ln}$ of the objective function stated
in (\ref{eq:ystarn_msector}), namely:%
\begin{equation}
-\bar{R}_{l}+\varpi_{ln}\kappa_{1l}/k-\lambda_{l,t}j_{l}\int\left(
1+\varpi_{ln}j_{l}z\right)  ^{-1}z\nu_{l}\left(  dz\right)  =0\text{ for
}l=1,\ldots,m. \label{eq:generaljump_FOC}%
\end{equation}
For binomially-distributed jumps, the conditions reduce to%
\begin{equation}
-R_{l}+\varpi_{ln}\kappa_{1l}/k-\lambda_{l,t}j_{l}\left(  p_{l}u_{l}\left(
1+\varpi_{ln}j_{l}u_{l}\right)  ^{-1}+\left(  1-p_{l}\right)  d_{l}\left(
1+\varpi_{ln}j_{l}d_{l}\right)  ^{-1}\right)  =0
\label{eq:FOC_Log_msector_binomial}%
\end{equation}
which produce a cubic polynomial equation in $\varpi_{ln},$ again explicitly
solvable, separately for each for $l=1,\ldots,m$.

\section{Consequences for the Optimal Portfolio Allocation\label{sec:conseq}}

We now investigate in more detail the consequences of the explicit portfolio
weight formulae for an optimal asset allocation. The first element we note
from (\ref{eq:closed_form_omega_msector}) is the fact that the optimal
portfolio is time-varying, since its composition changes with the jump
intensities $\lambda_{lt}.$

In the case of purely diffusive risk, the optimal portfolio weights reduce to
the classical Merton formula
\begin{equation}
\boldsymbol{\omega}^{\ast}=\boldsymbol{\Sigma}^{-1}\boldsymbol{R}%
\end{equation}
or, replacing $\boldsymbol{\Sigma}^{-1}$ by its explicit expression
\begin{equation}
\left\{
\begin{array}
[c]{l}%
\boldsymbol{\omega}_{l}^{\perp\ast}=\frac{1}{\kappa_{2l}}\boldsymbol{R}%
_{l}^{\perp}\\
\bar{\omega}_{l}^{\ast}=\frac{1}{\kappa_{1l}}\bar{R}_{l}%
\end{array}
\right.
\end{equation}
for $l=1,..,m.$ As is well known, the solution in this case is constant.

In the case where Poissonian jumps are added to the model, the solution
specializes to (\ref{eq:closed_form_omega_msector}) but with $\lambda_{l,t}$
replaced by the constant Poissonian jump intensity. As in the purely-diffusive
case, the solution becomes constant. A solution with similar qualitative
features would be obtained in the mutually exciting case if one replaced each
stochastic jump intensity by its unconditional expected value, although given
that the portfolio weights are nonlinear functions of $\lambda_{l,t}$ these
would not be the unconditional expected values of the portfolio weights.

Let us now return to the full solution in the mutually exciting case. A
univariate model captures only part of the mutual excitation phenomenon: with
a single asset, only time series self-excitation can take place. In order to
investigate the full potential impact of mutual excitation on optimal
portfolio holdings, we now specialize the results above to a two-asset model
where both time series and cross-sectional excitation can arise. Assets $1$
and $2$ can excite each other, not necessarily in a symmetric fashion,
depending on the $2\times2$ matrix of mutually exciting intensities with
coefficients $d_{ij},$ $i,j=1,2$. The formulae above specialize with $n=2,$
$k=1$ and $m=2,$ in which case $\kappa_{1l}=v_{l}^{2}$ for $l=1,2$ and hence:%
\begin{equation}
\left\{
\begin{array}
[c]{l}%
\boldsymbol{\omega}_{l}^{\perp\ast}=\frac{1}{\kappa_{2l}}\boldsymbol{R}%
_{l}^{\perp}\\
\bar{\omega}_{l}^{\ast}=\frac{-v_{l}^{2}+j_{l}\bar{z}_{l}R_{l}+\sqrt{\left(
j_{l}\bar{z}_{l}R_{l}+v_{l}^{2}\right)  ^{2}+4\lambda_{l,t}j_{l}^{2}\bar
{z}_{l}^{2}v_{l}^{2}}}{2j_{l}\bar{z}_{l}v_{l}^{2}}%
\end{array}
\right.  \label{eq:closedform2assets}%
\end{equation}

Consider the change in the optimal portfolio allocation of an investor who
observes a first shock, say to asset $1.$ For concreteness, let us return to
the two-asset class scenario illustrated in Figure \ref{fig:2assetNSlambda}.
In a Poissonian jump model, observing the first shock at time $T_{1}$ does not
change the investor's optimal portfolio: since jumps' future arrivals are
independent of past jumps, there is nothing to do going forward other than to
absorb the losses from the first jump. In the mutually exciting model,
however, the occurrence of the first jump at time $T_{1}$ self-excites the
jump intensity $\lambda_{1,t}$ for $t>T_{1}$. This increase in $\lambda_{1,t}$
translates, if $\bar{z}_{1}<0,$ into a reduced asset allocation to asset $1.$
Moreover, these jumps have a contagious effect on $S_{2}$ since a jump in
asset $1$ cross-excites the jump intensity $\lambda_{2,t}$ of asset $2$. If
$\bar{z}_{2}<0,$ then the optimal policy is to reduce the asset allocation to
asset $2.$ Note that the reduction to the position in both risky assets occurs
immediately after $T_{1},$ without waiting for future jumps.%

\begin{figure}
[pth]
\begin{center}
\includegraphics[
height=5.3366in,
width=5.1224in
]%
{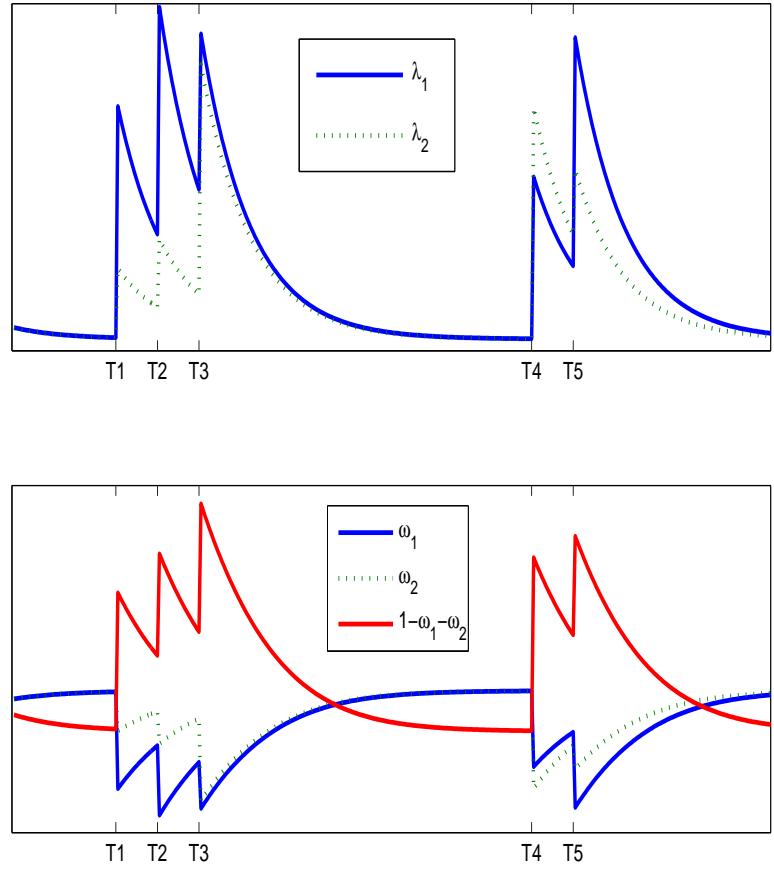}%
\caption{Mutual excitation in a two asset-class world:\ Jump intensities (top
panel) and optimal portfolio weights (bottom panel).}%
\label{fig:2assetlambdaomega}%
\end{center}
\end{figure}

For the same sample paths as in Figure \ref{fig:2assetNSlambda}, Figure
\ref{fig:2assetlambdaomega} shows the optimal portfolio weights. This is a
flight to quality, in the sense that the occurrence of a single jump in asset
$1$ causes the investor to flee both risky assets (or all of them in the
general $n$ case) for the safety of the riskless asset. This phenomenon is
well documented as an empirical reality in practical situations; we believe
that this is the first portfolio choice model to actually capture it in a
theoretical setting.

Increases in the jump intensities raise the probability of observing another
jump in $S_{1}$ at the future time $T_{2}$. This, in turn, raises the
probability of seeing a jump in $S_{2}$ at time $T_{3}$. Later on, at time
$T_{4}$, the jump in $S_{2}$ raises the probability of seeing a jump in
$S_{1}$ at some future time $T_{5}$, and so on. Mean reversion in the jump
intensities at respective rates $\alpha_{1}$ and $\alpha_{2}$ counteracts
these successive increases, keeping the intensities non-explosive (stationary,
in fact). The optimal portfolio policy reacts to each change in jump intensity
accordingly: increases lead to reduced asset allocation, decreases to
increased asset allocation. Inevitably, this analysis is conducted in a
partial equilibrium framework: it assumes among other things that expected
returns do not change as jump intensities change, or at least not sufficiently
to reverse the result.

Another interesting empirical phenomenon that can be revisited in light of
these optimal portfolio policies is home bias. Home bias refers to the
observed tendency of most investors' portfolios to be insufficiently
diversified internationally. In the model, the benefits from diversification
are much less valuable than in the standard diffusive model since
international assets do not protect as much against jumps in domestic assets
in the presence of cross-sectional mutual excitation.

Finally, one phenomenon that is often documented in financial crises is the
large increase in correlations between asset classes, with all of them
increasing towards $1.$ In the context of the model, empirical correlations
measured over a period where mutual excitation occurs will indeed be close to
$1$ as long as the jumps that result from mutual excitation are of the same
sign (say both $\bar{z}_{1}<0$ and $\bar{z}_{2}<0$), at least on average. In
such periods, the jumps' contribution to the observed correlation trumps the
continuous contribution.


\section{Other HARA Investors\label{sec:powerandexp}}

The cases of power and exponential utility also lead to candidate value
functions and optimal portfolios in separable form. Due to the complexity of
the underlying HJB conditions, the relevant verification result requires a
lengthy, and perhaps uninformative, analysis that we have not yet completed.
Nonetheless, as we now show, the candidate solutions can be characterized in
terms of a fixed point problem that in principle can be solved numerically.

\subsection{Power Utility}

In this section, we consider the power investor with $U(c)=c^{\gamma}/\gamma$,
$\gamma\in(-\infty,0)\cup(0,1)$. The analysis now consists in verifying the
consistency of the following form for the solution to (\ref{eq:HJBgeneral}) in
the form
\[
L(x,\boldsymbol{\lambda})=x^{\gamma}g(\boldsymbol{\lambda})/\gamma
\]
for some positive function $g$. Substitution into (\ref{eq:HJBgeneral}) leads
to
\begin{align}
0=\max_{\left\{  C,\boldsymbol{\omega}\right\}  }  &  \left\{  U(C)-\beta
L\right. \\
&  +\frac{L}{g}\sum_{l=1}^{m}\left[  \alpha_{l}\left(  \lambda_{l,\infty
}-\lambda_{l}\right)  \frac{\partial g\left(  \boldsymbol{\lambda}\right)
}{\partial\lambda_{l}}+\lambda_{l}\left(  g(\boldsymbol{\lambda}%
+\boldsymbol{d}_{l})-g(\boldsymbol{\lambda})\right)  \right] \nonumber\\
&  +\gamma L\left(  r+\boldsymbol{\omega}^{\prime}\boldsymbol{R}-C/x\right)
+\frac{\gamma(\gamma-1)L}{2}\boldsymbol{\omega}^{\prime}\boldsymbol{\Sigma
\omega}\label{eq:HJBpower}\\
&  \left.  +L\sum_{l=1}^{m}\lambda_{l}g(\boldsymbol{\lambda}+\boldsymbol{d}%
_{l})/g(\boldsymbol{\lambda})\int\left[  \left(  1+\boldsymbol{\omega}%
^{\prime}\boldsymbol{J}z\right)  ^{\gamma}-1\right]  \nu_{l}\left(  dz\right)
\right\} \nonumber
\end{align}
The optimal policy for the portfolio weight at time $t\geq0$ is
$\boldsymbol{\omega}_{t}^{\ast}=\boldsymbol{\omega}^{\ast}(\boldsymbol{h}%
(\boldsymbol{\lambda}_{t}))$ where
\begin{align}
\boldsymbol{\omega}^{\ast}(\boldsymbol{\lambda})  &  =\left\{
\begin{array}
[c]{ll}%
\arg\min_{\boldsymbol{\omega}}K^{\gamma}(\boldsymbol{\omega}%
,\boldsymbol{\lambda}) & \gamma>0\\
\arg\max_{\boldsymbol{\omega}}K^{\gamma}(\boldsymbol{\omega}%
,\boldsymbol{\lambda}) & \gamma<0
\end{array}
\right. \\
{h_{l}}(\boldsymbol{{\lambda}})  &  ={\lambda_{l}}g(\boldsymbol{{\lambda}%
}+{\boldsymbol{{d}}_{l}})/g(\boldsymbol{{\lambda}})\\
{K^{\gamma}}(\boldsymbol{{\omega}},\boldsymbol{{\lambda}})  &  \equiv
-\gamma\boldsymbol{{\omega^{\prime}}}R-\frac{{\gamma(\gamma-1)}}%
{2}\boldsymbol{{\omega^{\prime}}}\Sigma\boldsymbol{{\omega}}-\sum_{l=1}%
^{m}\lambda_{l,t}{\int_{(0,1]}}\left[  {{{\left(  {1+\boldsymbol{{\omega
^{\prime}J}}z}\right)  }^{\gamma}}-1}\right]  \nu_{l}\left(  {dz}\right)  .
\end{align}
Solving the first order condition for $C$ leads to the optimal consumption
$C^{\ast}=x\left(  \gamma g(\boldsymbol{\lambda})\right)  ^{1/(\gamma-1)}$.

Finally, $g$ is characterized by the implicit equation
\begin{align}
\lbrack\mathcal{A}g](\boldsymbol{{\lambda}})-(\beta-r\gamma)g\left(
\boldsymbol{{\lambda}}\right)   &  =G(\boldsymbol{{\lambda}};g)\label{HJBg}\\
G({\boldsymbol{\lambda}};g)  &  =g\left(  {\boldsymbol{\lambda}}\right)
{K^{\gamma}}\left(  \boldsymbol{\omega}^{\ast}(\boldsymbol{h}%
(\boldsymbol{\lambda})),{\boldsymbol{h}}({\boldsymbol{\lambda}})\right)
+(1-\gamma){\left(  {\gamma g({\boldsymbol{\lambda}})}\right)  ^{\gamma
/(\gamma-1)}},\ \boldsymbol{{\lambda}}\in\mathbb{R}_{+}^{n}. \label{Gdef}%
\end{align}
The Markov generator $\mathcal{A}$ is again given by (\ref{Generator}). We use
the Feynman-Kac formula to write the solution $g$ as a fixed point of an
infinite dimensional nonlinear mapping:
\begin{align}
g  &  =\mathcal{G}(g)\label{fixedpoint}\\
(\mathcal{G}(g))\left(  \boldsymbol{\lambda}\right)  :=  &  \int_{0}^{\infty
}e^{-(\beta-r\gamma)s}\mathbb{E}_{0,\boldsymbol{\lambda}}\left[
G(\boldsymbol{\lambda}_{s};g)\right]  ds.
\end{align}

\begin{remark}
Note that the power case differs from the log case in two distinct ways. The
first term on the right side of (\ref{Gdef}) is a distorted version of the
right side of (\ref{HJBf}), where the $\lambda$ dependence in the function $K$
is distorted in a $g$ dependent fashion through the mapping $\boldsymbol{h}$.
The second term does not arise in the log case, and introduces complications;
a similar situation occurs and is dealt with in a different model, see
\cite{delongkluppelberg08}. Following \cite{delongkluppelberg08}, one can
attempt to verify that (\ref{fixedpoint}) is a contraction mapping, and that
consequently the sequence of iterates $\{g^{(i)},i=0,1,\dots\}$ with
$g^{(0)}=1$ and $g^{(i+1)}=\mathcal{G}(g^{(i)})$ converges to $g$. We do not
attempt this here.

In examples for which the functions $K$ and $\boldsymbol{\omega}_{t}^{\ast
}(\boldsymbol{\lambda})$ are explicitly solvable, the iteration scheme can
apparently be efficiently implemented numerically.

\end{remark}

\subsection{Exponential Utility}

An investor with the exponential utility $U(x)=-e^{-\gamma x}/\gamma$ with
risk aversion parameter $\gamma>0$, unlike the log investor, can in principle
consume at a negative rate, perhaps even reaching negative wealth, and thus in
this setting the question of defining admissible strategies is more involved.
We can however, search for candidate optimal strategies that solve the reduced
HJB equation (\ref{eq:HJBgeneral}) in the form $L(x,\mathbb{\lambda}%
)=-\exp[-\kappa x]g(\mathbb{\lambda})$ for some positive function $g$, and
then attempt to interpret the result. The HJB equation turns into:%
\begin{align}
0  &  =\max_{\left\{  C,\boldsymbol{\omega}\right\}  }\left\{  U(C)-\beta
L+\frac{L}{g(\boldsymbol{\lambda})}[\mathcal{A}g](\boldsymbol{{\lambda}%
})\right. \nonumber\\
&  -\kappa L\left(  rx+\boldsymbol{\omega}^{\prime}\boldsymbol{R}x-C\right)
+\frac{\kappa^{2}L}{2}\boldsymbol{\omega}^{\prime}\boldsymbol{\Sigma\omega
}x^{2}\label{expHJB}\\
&  +\left.  L\sum_{l=1}^{m}\lambda_{l,t}\frac{g(\boldsymbol{\lambda
}+\boldsymbol{d}_{l})}{g(\boldsymbol{\lambda})}\int\left(  \exp\left[
-\kappa(\boldsymbol{\omega}^{\prime}\boldsymbol{J}_{l}zx\right]  -1\right)
\nu_{l}(dz)\right\}  \ ,\nonumber
\end{align}
where $\mathcal{A}g$ is given as before by (\ref{Generator}). The first order
conditions for $C^{\ast}$ imply that $\gamma U(C^{\ast})=\kappa L$ and hence
\[
C^{\ast}=\frac{1}{\gamma}\left(  \kappa x-\log g-\log\kappa\right)  .
\]
The candidate optimal portfolio weights are best expressed in terms of dollar
amounts $\pi_{i,t}=\omega_{i,t}X_{t}$ invested. We find $\boldsymbol{\pi}%
_{t}^{\ast}=\boldsymbol{\pi}^{\ast}(\boldsymbol{h}(\boldsymbol{\lambda
})),\boldsymbol{h}=(h_{1},\dots,h_{m})$ where
\begin{align}
\boldsymbol{\pi}^{\ast}(\boldsymbol{\lambda})  &  =\operatorname{argmin}%
_{\boldsymbol{\pi}}K(\boldsymbol{\pi},\boldsymbol{\lambda}))\\
{h_{l}}(\boldsymbol{{\lambda}})  &  ={\lambda_{l}}g(\boldsymbol{{\lambda}%
}+{\boldsymbol{{d}}_{l}})/g(\boldsymbol{{\lambda}})\\
K(\boldsymbol{\pi},\boldsymbol{\lambda})  &  =\kappa\boldsymbol{\pi}^{\prime
}\boldsymbol{R}-\frac{\kappa^{2}}{2}\boldsymbol{\pi}^{\prime}%
\boldsymbol{\Sigma\pi}-\sum_{l=1}^{m}\lambda_{l}\int\left(  \exp
[-\kappa\left(  \boldsymbol{\pi}^{\prime}\boldsymbol{J}\right)  _{l}%
z]-1\right)  \nu_{l}\left(  dz\right) \nonumber
\end{align}

Substitution of $C^{\ast},\boldsymbol{\pi}^{\ast}$ back into (\ref{expHJB})
leads to
\[
0=(r-\beta) g+\lbrack\mathcal{A}g](\boldsymbol{{\lambda}}) -\kappa\left(
rx-C^{\ast}\right)  -g(\boldsymbol{{\lambda}})K(\boldsymbol{\pi}^{\ast
}(\boldsymbol{\lambda}),\boldsymbol{h(\lambda}))
\]
which in turn implies that $\kappa=r\gamma$. The condition on $g$ is now
implicit:
\begin{align}
\lbrack\mathcal{A}g](\boldsymbol{{\lambda}}) -(\beta-r\gamma+r\log(r\gamma))
g(\boldsymbol{{\lambda}})  &  =\tilde G(\boldsymbol{{\lambda}};g)\\
\tilde G(\lambda;g)  &  = g(\boldsymbol{{\lambda}})\left[  r\log g+
K(\boldsymbol{\pi}^{\ast}(\boldsymbol{\lambda}),\boldsymbol{h(\lambda
}))\right]
\end{align}
This is very similar to the characterization of $g$ for the power utility
case. In examples where $K$ and $\boldsymbol{\pi}^{\ast}$ are explicitly
known, one can attempt to solve numerically for $g$ by iteration.

\section{Conclusions\label{sec:conclusions}}

This paper extends the range of models for which solutions to the optimal
dynamic portfolio-consumption problem are available, to one which includes
mutually exciting jumps. We analyze features of the optimal solution and show
that it differs from the usual case in important ways. In particular, it
introduces an explicit time-variation in the optimal portfolio weights in
response to changes in the jump intensity, providing a rare example of an
explicit time-varying optimal portfolio solution. Moreover, power and
exponential investors both adopt more aggressive strategies, characterized by
a distortion function $\boldsymbol{h}$, than the corresponding investor who
does not fully recognize the mutual excitation effect.\pagebreak

\bibliographystyle{elsevier}
\bibliography{mainbib}

\end{document}